\newtheorem{theorem}{Theorem}
\newtheorem{lemma}{Lemma}
\newtheorem{assumption}{Assumption}
\theoremstyle{remark}
\newtheorem {fact}{Fact}
\title{Connected and Automated Vehicle Platoon Formation Control via Differential Games} 
\author[1]{Hossein B. Jond}
\author[2]{Aykut Y{\i}ld{\i}z}
\affil[1]{Department of Computer Science, VSB-Technical University of Ostrava, Ostrava-Poruba, Czech Republic}
\affil[2]{Department of Electrical and Electronics Engineering, TED University, Ankara, Turkey}
\begin{document}

\maketitle

\begin{abstract}
 In this study, the connected and automated vehicles (CAVs) platooning problem is resolved under a differential game framework. Three information topologies are considered here. Firstly, the Predecessor-following (PF) topology is utilized where the vehicles control the distance with respect to the merely nearest predecessor via a sensor link-based information flow. Secondly, the Two-predecessor-following topology (TPF) is exploited where each vehicle controls the distance with respect to the two nearest predecessors. In this topology, the second predecessor is communicated via a Vehicle-to-vehicle (V2V) link. The individual trajectories of CAVs under the Nash equilibrium are derived in closed-form for these two information topologies. Finally, general information topology is examined and the differential game is formulated in this context. In all these options, Pontryagin's principle is employed to investigate the existence and uniqueness of the Nash equilibrium and obtain its corresponding trajectories. In the general topology, we suppose numerical computation of eigenvalues and eigenvectors. Finally, the stability behavior of the platoon for the PF, TPF and general topologies are investigated. All these approaches represent promising and powerful analytical representations of the CAV platoons under the differential games. Simulation experiments have verified the efficiency of the proposed models and their solutions as well as their better results in comparison with the Model Predictive Control (MPC).
 
\textbf{Keywords:} Connected and automated vehicle (CAV); differential game; Model Predictive Control (MPC); Nash equilibrium; platoon control; Pontryagin's principle; stability analysis  
\end{abstract}
\maketitle

\section{Introduction}\label{sec1}

Connected and automated vehicles (CAVs) and self-driving cars are promising technologies to enhance our daily experience of traveling with vehicles on the roads. The most common group behavior of CAVs is forming and maintaining platoons with short headways on the road \cite{WOO2021103442}. To this end, CAVs need to exchange information via onboard sensors, Vehicle-to-Vehicle (V2V), Vehicle-to-Infrastructure (V2I), or Vehicle-to-Everything (V2X) communications. A platoon is defined as a closely-spaced group of vehicles driving with zero relative speed by maintaining a linear formation \cite{bergenhem2012overview}. Such coordination is achieved by exchanging local information among the vehicles \cite{abualhoul2013platooning}. Platoons using wireless communications suffer time-varying delays and packet dropouts \cite{3141732}. There are three types of platooning methods as designated in \cite{madeleine2012vehicle}. Leader-Follower approach in \cite{dunbar2011distributed}, 
Behavior-Based Approach in \cite{antonelli2010nsb}, and Virtual Structure approach in \cite{ren2004decentralized}. The behavior-based approach is associated with Artificial Potential Field (APF) in \cite{semsar2016cooperative}, flocking in \cite{hao2021flock}, and swarm intelligence in \cite{li2018swarm}. Platoons can be classified in terms of the type of vehicles as truck platoons \cite{tsugawa2013overview}, underwater vehicles \cite{stilwell2000platoons}, unmanned aerial vehicles \cite{chen2017reachability}, and marine platoons \cite{liang2020adaptive}. 

Vehicle platoons offer remarkable benefits as listed in \cite{wang2019survey}. First of all, road capacity can be boosted due to the decrease in gaps between vehicles. Secondly, fuel consumption and emissions of pollutants can be decreased owing to the elimination of dispensable change of speed and aerodynamic drag on the pursuing vehicles. Besides, driving safety is potentially enhanced since the detection and actuation time are lessened compared to manual vehicles. Moreover, downstream traffic information can rapidly be transmitted upstream. Finally, passenger satisfaction can be enhanced since the system behavior is more reactive to changes in the traffic, and the shorter pursuing spaces can prevent cut-ins of other vehicles in the platoon. To sum up, vehicle platooning is a solution for traffic congestion and air pollution issues as stated in \cite{bergenhem2012overview}.

Differential games \cite{hoogendoorn2009generic} and optimal control theory \cite{morbidi2013decentralized} have been widely applied to the vehicle platooning problem. Optimal control theory studied in \cite{kirk2004optimal} considers a single cost function optimized by the agents. On the other hand, differential games in \cite{engwerda2005lq} extend this idea to multiple cost functions each minimized by its relevant agent. Games can be cooperative or noncooperative. Differential games are ideally suited to model noncooperative dynamics where there is no central decision-maker. The corresponding solution concept for noncooperative games is the Nash equilibrium. While on the roads, the CAVs as rational agents comply with their interests and tend to strategies based on the Nash equilibrium resulting from a noncooperative game. In contrast, in an emergency such as after a collision, interacting vehicles have to be explicitly cooperative to stabilize themselves on the road and protect their safety as well as others \cite{00745}. Platooning under Nash optimality \cite{YU20206610}, intersection management with generalized Nash equilibrium \cite{Britzelmeier}, and lane-change via noncooperative game theory \cite{20762087} are some of the CAVs problems solved by applying noncooperative games. In this paper, we propose a differential game-based platoon control scheme under the framework of noncooperative differential games. The analytical expressions and closed-form solutions corresponding to the Nash equilibrium are derived from Pontryagin's minimum principle. Once the game is established, the Nash equilibrium-based trajectories are calculated beforehand and the CAVs follow their trajectories for the rest of time.

Research on differential game-based control of networked multi-agent systems has developed in three directions. These are Lyapunov stability analysis in \cite{zhong2017model} and numerical solution of Riccati Equations in \cite{gu2007differential} and numerical solution of Hamilton-Jacobi-Bellman (HJB) equations in \cite{haurie1994monitoring}. However, explicit solution of players' trajectories in differential game systems is studied only in a few research works and is left understudied in differential games-based networked control systems for vehicle platoons. Such explicit solutions enable fast simulations, reliable analysis, useful structural properties as well as a simple stability test. The main bottleneck of optimal control and differential games is the computational complexity of the underlying algorithms. Nevertheless, high complexity is not an issue for explicit expressions obtained in this study. The Model Predictive Control (MPC) and receding horizon control algorithms are other fast computational schemes in which they usually rely on linear approximations of the nonlinear system and its future behavior predictions over a small operating range \cite{dunbar2011distributed,mpc}. However, the proposed analytical approach is theoretically stronger than MPC since it does not rely on approximations and predictions.

In this study, our main contribution is to show that cooperative dynamics emerge from noncooperative decisions of vehicles. The problem that we examine is the one-dimensional point-to-point transportation of a vehicle team with coupled dynamics. The formulation of noncooperative games is based on Section 6.5.1 of \cite{bacsar1998dynamic}. We extend the PF topology investigated in \cite{yildiz2021vehicle} to the TPF and general topologies. This problem is crucial since it provides analytical insight into the platoon dynamics problem. We arrive at explicit Nash equilibrium derived in closed-form for the PF and TPF information topologies. We also obtain the solution of the differential game for general information topology that can be calculated numerically.

Stability is an essential requirement for a platoon. There are two types of stabilities: individual or internal stability and string stability. String stability can be defined mathematically as certain sufficient conditions of the inter-vehicles distancing policy errors, velocity or acceleration transfer functions \cite{72189653}. The platoon is necessarily string stable if these sufficient conditions hold. In an internally stable platoon system, inter-vehicles distancing policy error of each following vehicle reduces to zero at a steady state or simply, the vehicle converges to the given trajectory \cite{li2017ieee}. String stability conditions ensure that the exogenous disturbances and spacing policy errors do not propagate from the predecessor vehicles along the vehicle string \cite{35695558}. We investigate both internal stability and string stability under the proposed differential game framework for CAVs platooning. We show that the internal stability of the platoon system is achieved. While it is not straightforward to prove string stability explicitly, we have established its sufficient condition for the PF topology and have left the other topologies for future work. Finally, the CAVs platooning problem is resolved using the MPC method. To that end, the proposed differential games are converted to the standard matrix expressions through the use of graph theory and graph Laplacian matrix. Simulation results justified all models and solutions.

The rest of the manuscript is organized as follows. In Section \ref{section:formulation}, the problem definition, the main results and proofs are highlighted. Section \ref{section:MPC} discusses the MPC-based platoon control. In Section \ref{section:simulation}, simulation results are demonstrated.  Finally, Section \ref{section:conclusions} is dedicated to the conclusions and future work.

\section{Platoon formation control}\label{section:formulation}

This section presents the game theory approach to vehicle platoon formation control. We consider a simplified kinematic model of vehicles in which the control input directly affects the velocity. The platoon is characterized based on the communication topology (CT) and inter-vehicle distances between each CAV and the preceding. Differential game-based formulations of the platoon formation control under the PF and TPF topologies are presented in the subsections \ref{section:PF} and \ref{section:TPF}, respectively. We have derived the associated individual state trajectories with the Nash equilibrium actions for the vehicles. The analytical expressions reveal explicitly the role of the influencing factors on the evolution of platooning dynamic behavior. Afterward, we extend the differential game formulation to involve general topology. We show that the proposed differential game problem under the general topology is convex and the control inputs can be computed efficiently.

\subsection{Predecessor-following (PF) Topology}\label{section:PF}

Information exchange among the vehicles via a connected information topology is necessary for platoon formation control. Following \cite{9663029}, we distinguish between connection links among vehicles by categorizing them into V2V connection links and sensor connection links. A sensor link emphasizes the information flow from the predecessor vehicle to the ego vehicle (see Figure \ref{fig:PF-topology}). A V2V connection link between a pair of vehicles can be one of rearward, forward, and bidirectional connection link types. 

Consider a platoon of $n$ CAVs under the PF topology as shown in Figure \ref{fig:PF-topology}. CAVs move along the center of a single lane, all in the same traffic flow. Under this topology, each vehicle must follow its predecessor by adjusting its inter-vehicle distance $d_i$ using the relative displacement information acquired through the unidirectional sensor link. Note that $d_i$ can be regarded as the length of vehicle $i$ appended to the measured inter-vehicle distance with its predecessor vehicle. We assume that the actual leading vehicle of the platoon indexed by 1 follows a reference trajectory indexed by 0. The reference ($x_0$) is either a manned/unmanned vehicle on the road or simply a reference trajectory generated by a driver-assistive technology such as Lane Following Assist (LFA). The actual vehicles of the platoon are indexed as $1,\ldots,n$. We employ a single-integrator model to govern the longitudinal dynamics of the vehicles \cite{6112659}. Under a hierarchical control scheme \cite{9357470,9477303}, this model at the top of the hierarchy provides the control input that directly affects the longitudinal velocities of the vehicles to maintain desired inter-vehicle distances. At the bottom of the hierarchy, a low-level controller adjusts the actuation of each vehicle at the throttle and brake level to achieve the desired velocities.

\begin{figure}[ht]
\centering
\includegraphics[width=1\textwidth]{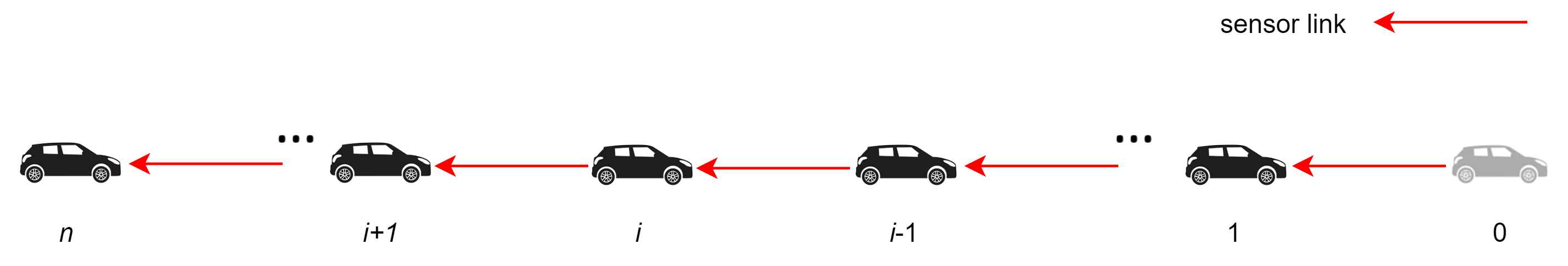}
\caption{A vehicle platoon with predecessor-following (PF) topology.} \label{fig:PF-topology}
\end{figure}

Let $x_i$ and $u_i$ denote the $i$th vehicle's longitudinal position and velocity/control input ($i=1,\ldots,n$), respectively. The relative longitudinal displacement dynamics can be described as  
\begin{equation}\label{eq:pairwise-dyn}
    \dot x_i-\dot x_{i-1}=u_i.
\end{equation}

The platoon formation control cost function for vehicle $i$ ($i=1,\ldots,n$) is defined as 
\begin{equation}\label{eq:costs}
    J_i(x_{i},x_{i-1},u_i)=\frac{1}{2}\int_0^{t_f}(\omega_i(x_{i}-x_{i-1}-d_i)^2+u_i^2)~\mathrm{dt},
\end{equation}
where $t_f$ is the terminal time and $\omega_i>0$ is a sensor link-related weighting parameter. The weighting parameter $\omega_i$ penalizes the inter-vehicle displacement and each vehicle can adjust this parameter taking its own interests or other personal factors such as fuel amount in the tank into account. We note that the PF topology is a collision-free platoon formation. Using the sensory information, each vehicle is aware of its relative displacement to its predecessor and regulates its velocity to acquire and preserve the inter-vehicle distancing policy then. Since there are no V2V communications among the vehicles, this information exchange network is not subject to external hacks or misinformation sent by a vehicle. Therefore, a connected environment is not necessary for this model.      

\begin{assumption} (Collision-avoidance) \label{initial-position}
The initial positions hold $x_0(0)>x_1(0)>\cdots>x_n(0)$.\end{assumption} 

Under Assumption \ref{initial-position}, collision-avoidance is embedded in the platoon formation control given by (\ref{eq:pairwise-dyn}) and (\ref{eq:costs}). Furthermore, the inter-vehicle distancing policies are defined as negative values, i.e., $d_i<0$ ($i=1,\ldots,n$). This is because the longitudinal relative displacement of vehicles have a direction opposite to the traffic flow. In other words, under Assumption \ref{initial-position} the relative displacements $x_i-x_{i-1}$ are resulting in negative values.  

\begin{assumption} (Reference trajectory)\label{leader-dynamics}
We assume $\dot x_0=c$ where $c$ is a constant, i.e., the reference has a constant speed. 
\end{assumption}

The dynamics and cost function in (\ref{eq:pairwise-dyn}) and (\ref{eq:costs}) demand the position information of the following vehicle and its predecessor. In the following, we transform these equations into a new dynamics and cost function that does not require the individual positions but the relative displacement acquired via the sensor links.

Let's define the relative displacement $y_i=x_i-x_{i-1}$ for vehicle $i$ ($i=1,\ldots,n$) that can be directly measured via the sensor link. The platoon formation control in (\ref{eq:pairwise-dyn}) and (\ref{eq:costs}) as a non-cooperative differential game can be redefined as the minimization of the following optimization
\begin{align}\label{eq:PF-opt}
    &\quad\min_{u_i}~  J_i(y_{i},u_i)=\frac{1}{2}\int_0^{t_f}(\omega_i(y_{i}-d_i)^2+u_i^2)~dt,\\
    &{s.t.}\notag\\ 
    &\quad \dot y_i =u_i , \quad y_i(0)=x_i(0)-x_{i-1}(0), \quad i=1,\ldots,n \nonumber.
\end{align}

We present the closed-form solutions to the associated relative displacement trajectories with the Nash equilibrium as follows.

\begin{theorem}\label{theorem:PF}
 For an $n$-vehicle platoon defined as the noncooperative differential game (\ref{eq:PF-opt}):  
\begin{enumerate}
    \item There is a unique Nash equilibrium of relative displacement control actions $u_i$'s.  
    \item The associated relative displacement trajectories with the Nash equilibrium $y_i$'s are given by
    \begin{equation} \label{eq:pairwise}
    y_i(t)=\alpha_i(t)y_i(0)+(\alpha_i(t)-1)d_i,
\end{equation}
\noindent where
\begin{align}
    &\alpha_i(t)=\frac{\mathrm{cosh}(\sqrt{\omega_i}(t_f-t))}{\mathrm{cosh}(\sqrt{\omega_i}t_f)}.
\end{align}
\end{enumerate}
\end{theorem}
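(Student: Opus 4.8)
The plan is to exploit the fact that, in the relative-displacement variables $y_i = x_i - x_{i-1}$, the game (\ref{eq:PF-opt}) \emph{decouples}: the running cost of player $i$ depends only on the pair $(y_i,u_i)$, and the dynamics $\dot y_i = u_i$ involves only player $i$'s control. Hence the best response of player $i$ is independent of the other players' strategies, and any Nash equilibrium of (\ref{eq:PF-opt}) is exactly the collection of individual minimizers of the $n$ scalar optimal control problems $\min_{u_i}\tfrac12\int_0^{t_f}(\omega_i(y_i-d_i)^2+u_i^2)\,dt$ subject to $\dot y_i=u_i$, $y_i(0)=x_i(0)-x_{i-1}(0)$. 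For part~1 I would observe that $y_i(t)=y_i(0)+\int_0^t u_i(s)\,ds$ is affine in $u_i$, so $(y_i-d_i)^2$ is convex and $u_i^2$ is strictly convex in $u_i\in L^2[0,t_f]$; therefore each $J_i$ is strictly convex and coercive, hence has a unique minimizer, and moreover the stationarity conditions of Pontryagin's principle are necessary \emph{and} sufficient. This gives existence and uniqueness of the Nash equilibrium $\{u_i\}$.

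For part~2 I would apply Pontryagin's minimum principle to the $i$th scalar problem. With Hamiltonian $H_i=\tfrac12\omega_i(y_i-d_i)^2+\tfrac12 u_i^2+\lambda_i u_i$, the stationarity condition gives $u_i=-\lambda_i$, the costate equation gives $\dot\lambda_i=-\omega_i(y_i-d_i)$, the state equation gives $\dot y_i=-\lambda_i$, and — since (\ref{eq:PF-opt}) carries no terminal penalty — the transversality condition is $\lambda_i(t_f)=0$. Together with the prescribed $y_i(0)$ this is a linear, constant-coefficient, two-point boundary-value problem.

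I would then solve it by passing to the tracking error $z_i=y_i-d_i$, which satisfies $\ddot z_i=\omega_i z_i$, so $z_i(t)=c_1\cosh\!\bigl(\sqrt{\omega_i}(t_f-t)\bigr)+c_2\sinh\!\bigl(\sqrt{\omega_i}(t_f-t)\bigr)$. Since $\lambda_i=-\dot z_i$, the transversality condition $\dot z_i(t_f)=0$ forces $c_2=0$, and the initial condition fixes $c_1=z_i(0)/\cosh(\sqrt{\omega_i}t_f)$; using $\omega_i>0$ this constant is well defined. Hence $z_i(t)=\alpha_i(t)\,z_i(0)$ with $\alpha_i(t)=\cosh(\sqrt{\omega_i}(t_f-t))/\cosh(\sqrt{\omega_i}t_f)$ exactly as stated, and substituting back $y_i=z_i+d_i$ yields the closed form (\ref{eq:pairwise}); as a consistency check, $\alpha_i(0)=1$ recovers $y_i(0)$.

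I do not expect a genuinely hard step. The one point that deserves care is the conceptual claim in the first paragraph — that the decoupled structure reduces the Nash-equilibrium problem to $n$ standalone optimal control problems, so that no fixed-point argument over strategy profiles is needed, together with the strict-convexity argument underpinning uniqueness; everything afterward is the routine integration of a constant-coefficient linear two-point boundary-value problem and the matching of its hyperbolic boundary data.
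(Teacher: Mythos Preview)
Your approach is sound and more direct than the paper's. Whereas you exploit the decoupling in the relative-displacement variables from the outset---reducing the Nash problem to $n$ independent scalar two-point boundary-value problems and solving each via the shifted variable $z_i=y_i-d_i$ and a single second-order ODE---the paper instead stacks all $n$ state--costate pairs into one $2n$-dimensional linear system, inverts $(s\bm{I}-[\cdot])$ in the Laplace domain to obtain a transition matrix with hyperbolic blocks $\cosh(\bm{A}^{1/2}t)$, $\sinh(\bm{A}^{1/2}t)\bm{A}^{-1/2}$, etc., solves for $\bm{\lambda}(0)$ from the terminal condition $\bm{\lambda}(t_f)=\bm{0}$, and only then extracts the scalar components. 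For the PF topology this matrix machinery is redundant (the information matrix $\bm{A}=\mathrm{diag}(\omega_1,\dots,\omega_n)$ is diagonal, so the stacked system was never coupled), but it is precisely the template the paper reuses for the TPF and general-topology theorems, where $\bm{A}$ becomes genuinely lower triangular and a diagonalization $\bm{A}=\bm{V}\bm{D}\bm{V}^{-1}$ is needed. Your strict-convexity argument for existence and uniqueness is also cleaner than the paper's, which infers uniqueness from the invertibility of $\bm{\Phi}_{22}(t_f)=\cosh(\bm{A}^{1/2}t_f)$.

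One small caution: carrying your back-substitution through gives $y_i(t)=\alpha_i(t)(y_i(0)-d_i)+d_i=\alpha_i(t)y_i(0)+(1-\alpha_i(t))d_i$, which differs in sign on the $d_i$ term from the formula as printed in~(\ref{eq:pairwise}). Your version is the one consistent with the target $y_i\to d_i$ encoded in the running cost, so the discrepancy traces to a sign slip in the paper's particular-solution term $\bm{\Psi}$ rather than to any flaw in your argument; just be aware that what you actually derive is not literally~(\ref{eq:pairwise}).
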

\begin{proof} 
Define the Hamiltonian
\begin{equation}\label{eq:non-coop-hamil}
    H_i(y_i,u_i,\lambda_i)=\frac{1}{2}(\omega_i(y_i-d_i)^2+u_i^2)+\lambda_i u_i, \quad i=1,\ldots,n,
\end{equation}
where $\lambda_i$ is the costate. According to the Pontryagin’s minimum principle, the necessary conditions for optimality are $\frac{\partial H_i}{\partial u_i}=0$ and $\dot \lambda_i=-\frac{\partial H_i}{\partial y_i}$. Applying the necessary conditions on (\ref{eq:non-coop-hamil}) yields:
\begin{align} \label{eq:neccesary-u}
    &u_i=-\lambda_i\\\label{eq:neccesary-lam}
    &\dot \lambda_i=-\omega_i y_i + \omega_i d_i,\quad  \lambda_i(t_f)=0, 
\end{align}
for $i=1,\ldots,n$.

Substituting (\ref{eq:neccesary-u}) in relative dynamics results in
\begin{equation}\label{eq:pairwise-lambda}
    \dot y_i= -\lambda_i, \quad y_i(0)=x_i(0)-x_{i-1}(0), \quad i=1,\ldots,n. 
\end{equation}

Let $\bm{y}=[y_1,\ldots,y_n]^\top\in \mathit{\mathbb{R}}^n$, $ \bm{\lambda}=[\lambda_1,\ldots,\lambda_n]^\top\in \mathit{\mathbb{R}}^n$, $ \bm{d}=[ d_1,\ldots,d_n]^\top\in \mathit{\mathbb{R}}^n$, $\bm{A}=\mathrm{diag}(\omega_1,\ldots,\omega_n)\in \mathit{\mathbb{R}}^n$, and $\bm{\omega}=\bm{A}\bm{d}$. Also, let ${\bf 0}$ and $\bm{I}$ denote the zero and identity matrix of appropriate dimension. The equations (\ref{eq:neccesary-u}), (\ref{eq:neccesary-lam}) and (\ref{eq:pairwise-lambda}) can be unified into the following differential equation
\begin{equation}\label{eq:state-equation}
    \begin{bmatrix}
\dot{\bm{y}} \\
\dot {\bm{\lambda}}
\end{bmatrix}=\begin{bmatrix}
{\bf 0} & -\bm{I} \\ -\bm{A} & {\bf 0}
\end{bmatrix}
\begin{bmatrix}
\bm{y} \\
\bm{\lambda}
\end{bmatrix}+
\begin{bmatrix}
{\bf 0}\\ \bm{\omega}
\end{bmatrix},
\end{equation}
with the initial condition vector $\bm{y}(0)$ where $y_i(0)=x_i(0)-x_{i-1}(0)$ ($i=1,\ldots,n$) and terminal condition vector $\bm{\lambda}(t_f)$ where $\lambda_i(t_f)=0$ ($i=1,\ldots,n$). 

If there is a unique Nash equilibrium solution, then equation (\ref{eq:state-equation}) has a solution for every $\bm{y}(0)$ and $\bm{\lambda}(t_f)$. Matrix analyses in the Laplace Transform domain then show that (\ref{eq:state-equation}) has a solution as follows
\begin{equation}\label{eq:state-sol}
    \begin{bmatrix}
\bm{y} \\
\bm{\lambda}
\end{bmatrix}={\bf\Phi}(t)
\begin{bmatrix}
\bm{y} (0)\\
\bm{\lambda}(0)
\end{bmatrix}+
{\bf \Psi}(t,0)\bm{\omega},
\end{equation}
where
\begin{align}
\bm{\Phi}(t)=&
\begin{bmatrix} \bm{\Phi}_{11}(t) & \bm{\Phi}_{12}(t)\\
\bm{\Phi}_{21}(t) &\bm{\Phi}_{22}(t)
\end{bmatrix}=\mathcal{L}^{-1}\{ 
    \begin{bmatrix} 
    s\bm{I} & \bm{I} \\
    \bm{A} & s\bm{I} 
    \end{bmatrix}^{-1}\}= \\ \nonumber
    &\mathcal{L}^{-1}\{ 
    \begin{bmatrix} 
    s(s^2\bm{I}-\bm{A})^{-1}   &    - (s^2\bm{I}-\bm{A})^{-1} \\
    -\bm{A}(s^2\bm{I}-\bm{A})^{-1}    &     s(s^2\bm{I}-\bm{A})^{-1}
    \end{bmatrix}\}=
\begin{bmatrix} \mathrm{cosh}(\bm{A}^{\frac{1}{2}}t) & -\mathrm{sinh}(\bm{A}^{\frac{1}{2}}t)\bm{A}^{-\frac{1}{2}}\\
-\bm{A}^{\frac{1}{2}}\mathrm{sinh}(\bm{A}^{\frac{1}{2}}t) & \mathrm{cosh}(\bm{A}^{\frac{1}{2}}t)
\end{bmatrix},    
\end{align}
and
\begin{equation}
    \bm{\Psi}(t,0)    =
    \begin{bmatrix}
    \bm{\Psi}_{1}(t,0)\\
    \bm{\Psi}_{2}(t,0)
    \end{bmatrix}
    =\int_0^{t}
    \begin{bmatrix}
    \bm{\Phi}_{12}(t-\tau)\\
    \bm{\Phi}_{22}(t-\tau)
    \end{bmatrix}d\tau=
    \begin{bmatrix}
    -(\bm{I}-\mathrm{cosh}(\bm{A}^{\frac{1}{2}}t))\bm{A}^{-1}\\
    -\mathrm{sinh}(\bm{A}^{\frac{1}{2}}t)\bm{A}^{-\frac{1}{2}}
    \end{bmatrix}.
\end{equation}

From (\ref{eq:state-sol}), $\bm{y}$ and $\bm{\lambda}$ are obtained as
\begin{align}
   \label{eq:PF-trajectory}&\bm{y}=\bm{\Phi}_{11}(t)\bm{y}(0)+\bm{\Phi}_{12}(t)\bm{\lambda}(0)+\bm{\Psi}_1(t,0)\bm{\omega},\\
   &\bm{\lambda}=\bm{\Phi}_{21}(t)\bm{y}(0)+\bm{\Phi}_{22}(t)\bm{\lambda}(0)+\bm{\Psi}_2(t,0)\bm{\omega}.
\end{align}
Applying the terminal condition $\bm{\lambda}(t_f)={\bf 0}$, we obtain vector $\bm{\lambda}(0)$ as
\begin{equation}\label{eq:initial-lambda-PF}
    \bm{\lambda}(0)=\bm{\Phi}_{22}^{-1}(t_f) \left[-\bm{\Phi}_{21}(t_f)\bm{y}(0) -\bm{\Psi}_2(t_f,0)\bm{\omega}\right].
\end{equation}
Notice that the existence of $\bm{\Phi}_{22}^{-1}(t_f)$ (which also means the existence of a unique Nash equilibrium solution) and $\bm{\Psi}_2(t_f,0)$ for all $t_f>0$ can be realized from the definition of the hyperbolic functions. Since $\bm{A}$ is diagonal, therefore, $\mathrm{sinh}(\bm{A}^{\frac{1}{2}}t_f)$ and $\mathrm{cosh}(\bm{A}^{\frac{1}{2}}t_f)$ are diagonal matrices with the diagonal elements of $\mathrm{sinh}(\sqrt{\omega_i}t)>0$ and $\mathrm{cosh}(\sqrt{\omega_i}t)>0$ for $i=1,\ldots,n$, respectively as $\omega_i>0$.      

Finally, substituting (\ref{eq:initial-lambda-PF}) into (\ref{eq:PF-trajectory}) and after simplifications, the associated relative displacement trajectories with the Nash equilibrium $y_i$'s are obtained as in (\ref{eq:pairwise}).
\end{proof}

The unique Nash equilibrium control actions $u_i$'s and the individual vehicle trajectories $x_i$'s can be calculated from (\ref{eq:pairwise}).

\subsection{Two-predecessor-following (TPF) Topology}\label{section:TPF}

Sometimes during the platoon formation control the sensor link can be unavailable due to, for example, hardware problems in the sensors. In such a situation, collision with the preceding or following vehicle is an immediate danger since the formation control cannot be maintained. Therefore, a more reliable than the PF but still relatively simple vehicle platoon topology can be the Two-predecessor-following (TPF) topology as shown in Figure \ref{fig:TPF-topology}. Under this topology, if the sensor link from the preceding vehicle is not available to the vehicle $i$, a rearward V2V link from the second preceding vehicle is still available to maintain the platoon formation control.  

\begin{figure}[ht]
\centering
\includegraphics[width=1\textwidth]{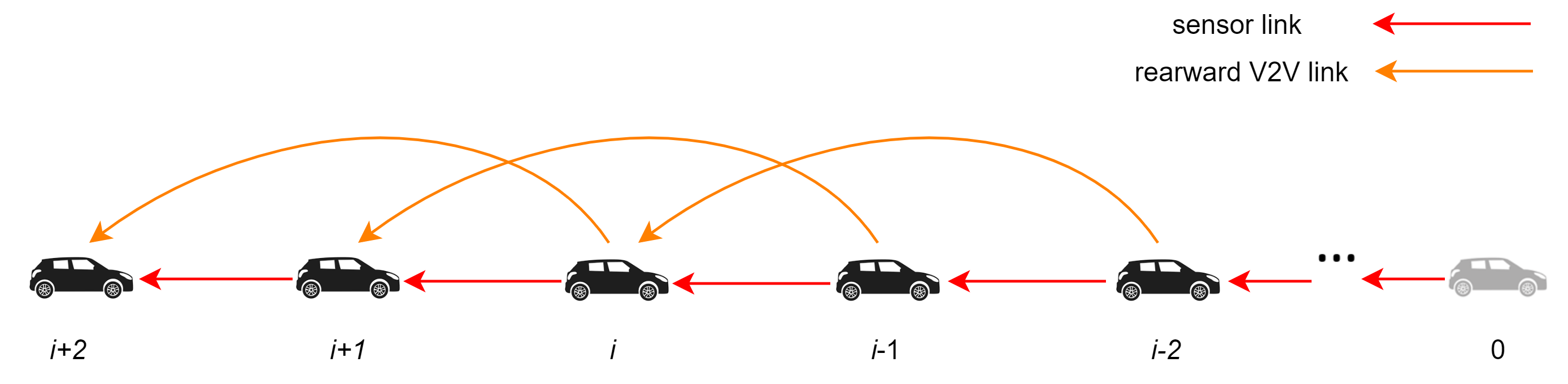}
\caption{Two-predecessor-following (TPF) topology in the connected environment.} \label{fig:TPF-topology}
\end{figure}

 The platoon formation control cost function for vehicle $i$ ($i=1,\ldots,n$) under the TPF topology is defined as
 \begin{equation}\label{eq:TPF-cost}
    J_i(x_{i},x_{i-1},x_{i-2}, u_i)=\frac{1}{2}\int_0^{t_f}\Big(\omega_i(x_{i}-x_{i-1}-d_i)^2+\Tilde{\omega}_i\big( x_{i}-x_{i-2}-(d_i+d_{i-1})\big)^2+u_i^2\Big)~\mathrm{dt},
\end{equation}
where $\Tilde{\omega}_i>0$ is the V2V communication link-related weighting parameter.

Using the relative displacement $y_i$'s, the platoon formation control under the TPF topology as a non-cooperative differential game can be redefined as the minimization of the following optimization
\begin{align}\label{eq:opt-TPF}
    &\quad \min_{u_i}~  J_i(y_i,y_{i-1},u_i)=\frac{1}{2}\int_0^{t_f}\Big(\omega_{i}(y_{i}-d_{i})^2+ \Tilde{\omega}_{i}\big((y_{i}- d_i)+(y_{i-1}- d_{i-1})\big)^2+u_i^2\Big)~\mathrm{dt},\\
    &{s.t.}\notag\\
    &\quad \dot y_i =u_i , \quad, y_i(0)=x_i(0)-x_{i-1}(0), \quad i=1,\ldots,n. \nonumber
\end{align}

Before we present the closed-form solutions to the associated individual (relative displacement) trajectories with the Nash equilibrium, the following facts and lemma are given first.

\begin{fact} \label{fact:det}
The determinant of an upper (or lower) triangular or diagonal matrix is the product of the (main) diagonal entries \cite{ANDRILLI2010143}.
\end{fact}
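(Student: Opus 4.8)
The plan is to prove Fact~\ref{fact:det} by induction on the matrix dimension $n$, establishing the upper triangular case first and then obtaining the lower triangular and diagonal cases as immediate consequences.

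First I would dispose of the base case: a $1\times 1$ matrix $[u_{11}]$ has determinant $u_{11}$, which is trivially the product of its (single) diagonal entry. For the inductive step, assume the claim for all $(n-1)\times(n-1)$ triangular matrices and let $\bm{U}=(u_{ij})$ be $n\times n$ upper triangular, so that $u_{ij}=0$ whenever $i>j$. I would expand $\det\bm{U}$ by cofactors (Laplace expansion) along the first column. All entries of that column except $u_{11}$ vanish, so the expansion collapses to $\det\bm{U}=u_{11}\,C_{11}=u_{11}\det\bm{U}_{11}$, where $\bm{U}_{11}$ is obtained by deleting row $1$ and column $1$ and the cofactor sign is $(-1)^{1+1}=+1$. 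The crucial structural point is that $\bm{U}_{11}$ is again upper triangular, of size $n-1$, with diagonal entries $u_{22},\ldots,u_{nn}$; the induction hypothesis then gives $\det\bm{U}_{11}=\prod_{i=2}^{n}u_{ii}$, whence $\det\bm{U}=\prod_{i=1}^{n}u_{ii}$.

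For a lower triangular matrix $\bm{L}$ I would use $\det\bm{L}=\det\bm{L}^\top$ and note that $\bm{L}^\top$ is upper triangular with the same diagonal, reducing to the case just proved; equivalently, one repeats the same induction expanding along the first row instead of the first column. A diagonal matrix is both upper and lower triangular, so it is already covered. A shorter alternative avoiding induction is the Leibniz formula $\det\bm{U}=\sum_{\sigma\in S_n}\mathrm{sgn}(\sigma)\prod_{i=1}^{n}u_{i,\sigma(i)}$: if $\sigma\neq\mathrm{id}$ then some index $i$ satisfies $\sigma(i)<i$, so $u_{i,\sigma(i)}=0$ and the whole product vanishes, leaving only the identity permutation, which contributes $\prod_{i=1}^{n}u_{ii}$.

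There is no genuine obstacle here, since this is a standard textbook fact; the only place demanding a little care is the bookkeeping --- checking that the minor $\bm{U}_{11}$ (or the transpose $\bm{L}^\top$) genuinely inherits the triangular shape and the correct diagonal entries, and, in the Leibniz route, the pigeonhole observation that every non-identity permutation must map some row index strictly below the diagonal. Accordingly, in the paper this result is simply quoted from \cite{ANDRILLI2010143} and invoked as a lemma in the determinant computations for the TPF topology that follow.
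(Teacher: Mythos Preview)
Your proof is correct and entirely standard; the induction via first-column cofactor expansion and the Leibniz-formula alternative are both valid, and you handle the bookkeeping (triangularity of the minor, the pigeonhole step for non-identity permutations) properly. As you yourself note in the last sentence, the paper does not prove this fact at all---it is stated as a cited textbook result from \cite{ANDRILLI2010143} and used without argument---so there is no proof in the paper to compare against.
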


\begin{fact} \label{fact:inverse}
Let $\bm{B}\in \mathit{\mathbb{R}}^n$ be a nonsingular matrix. Then, $\bm{B}^{-1}$ is upper (lower) triangular if and only if $\bm{B}$ is upper (lower) triangular (Fact 3.20.5. in \cite{bernstein11}).
\end{fact}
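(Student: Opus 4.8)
The plan is to prove Fact~\ref{fact:inverse} by induction on the matrix size, exploiting the block structure of a triangular matrix together with Fact~\ref{fact:det}. I would carry out the upper-triangular case in full detail; the lower-triangular case then follows immediately by transposition, since $\bm{B}$ is lower triangular if and only if $\bm{B}^\top$ is upper triangular, and $(\bm{B}^\top)^{-1}=(\bm{B}^{-1})^\top$ preserves the zero pattern. It also suffices to prove one implication: for the ``only if'' direction I would simply apply the ``if'' direction to $\bm{B}^{-1}$, which is itself nonsingular with inverse $\bm{B}$, so that $\bm{B}^{-1}$ upper triangular implies $\bm{B}=(\bm{B}^{-1})^{-1}$ is upper triangular.

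For the ``if'' direction, assume $\bm{B}\in\mathbb{R}^n$ is upper triangular and nonsingular. The base case $n=1$ is trivial. For the inductive step, partition
\[
\bm{B}=\begin{bmatrix}\bm{B}_{11} & \bm{b}_{12}\\ {\bf 0} & b_{nn}\end{bmatrix},
\]
where $\bm{B}_{11}\in\mathbb{R}^{n-1}$ is upper triangular, $\bm{b}_{12}$ is a column vector, and $b_{nn}$ is a scalar. By Fact~\ref{fact:det}, $\det\bm{B}=b_{nn}\det\bm{B}_{11}$, so nonsingularity of $\bm{B}$ forces $b_{nn}\neq0$ and $\bm{B}_{11}$ nonsingular. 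The block inversion formula then gives
\[
\bm{B}^{-1}=\begin{bmatrix}\bm{B}_{11}^{-1} & -b_{nn}^{-1}\bm{B}_{11}^{-1}\bm{b}_{12}\\ {\bf 0} & b_{nn}^{-1}\end{bmatrix},
\]
and by the induction hypothesis $\bm{B}_{11}^{-1}$ is upper triangular; hence $\bm{B}^{-1}$ is upper triangular. As an alternative one could argue column-by-column: solving $\bm{B}\bm{x}=\bm{e}_j$ by back-substitution forces $x_k=0$ for every $k>j$, because the last $n-j$ scalar equations form a homogeneous upper-triangular system whose diagonal entries are nonzero (Fact~\ref{fact:det}); thus the $j$th column of $\bm{B}^{-1}$ vanishes below position $j$, which is precisely upper triangularity.

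I do not anticipate a genuine obstacle here, since the content is elementary linear algebra; the only thing requiring a little care is the bookkeeping of the zero blocks through the partition and the block-inverse formula, together with phrasing the argument so that it applies verbatim to the lower-triangular case after transposition. In fact, as the statement is already available as Fact~3.20.5 of \cite{bernstein11}, it can equally be taken as given rather than reproved, which is presumably why it is labelled a fact in this paper.
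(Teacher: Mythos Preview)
Your proof is correct, and you have already anticipated the paper's approach in your final paragraph: the paper does not prove this statement at all but simply quotes it as Fact~3.20.5 of \cite{bernstein11}. Your inductive argument via block partitioning (with Fact~\ref{fact:det} supplying nonsingularity of the diagonal block) is a clean and standard way to establish the result if one wishes to be self-contained, and your reductions---transposition for the lower-triangular case, and applying the ``if'' direction to $\bm{B}^{-1}$ for the converse---are exactly right.
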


\begin{lemma}\label{lemma:factor} The matrix 
\begin{align}
\bm{A}=&\begin{bmatrix}
    \omega_{1} & 0 & 0 & 0 &\cdots & 0 \\
    0 & \omega_{2} & 0 & 0 &\cdots & 0  \\
    0 & \Tilde{\omega}_3 & \Tilde{\omega}_3+\omega_{3}& 0 & \cdots & 0  \\
    \vdots & \ddots & &\ddots & &\vdots  \\
    0 & 0& \cdots &0 & \Tilde{\omega}_n  &  \Tilde{\omega}_n+\omega_{n} 
    \end{bmatrix},
\end{align}
can be factorized as
\begin{equation}\label{eq:factor}
   \bm{A}=\bm{V}\bm{D}\bm{V}^{-1},
\end{equation}
where 
\begin{align}
    &\bm{D}=\mathrm{diag}[\cdots,\delta_i,\cdots]\label{eq:diag-D},\\ 
    &\bm{V}=[\cdots,\bm{v}_i,\cdots]\label{eq:modal-V},\\
    &\bm{V}^{-1}=[\cdots,\bm{\Tilde{v}}_i,\cdots]\label{eq:inv-v},\\
  &\delta_i=
\left\{
    	\begin{array}{ll}
		\omega_i  \quad &\quad \mbox{, $i=1,2$ } \\
		&\\
		 \Tilde{\omega}_i+\omega_i \quad & \quad\mbox{, $i=3,\cdots,n$ }, \\
	\end{array}
	\right.\label{eq:eignvalue-delta}\\ 
	&\bm{v}_1=[\eta_1^1,0,\cdots,0]^\top, \label{eq:v1}\\
	&\bm{v}_i=[0,\cdots,0,\eta_i^j,-\frac{\Tilde{\omega}_j  \eta_i^{j-1}}{\Tilde{\omega}_j+\omega_j-\delta_i},\cdots,-\frac{\Tilde{\omega}_n \eta_i^{n-1}}{\Tilde{\omega}_n+\omega_n-\delta_i}]^\top, \quad j=i,\cdots,n,\quad i=2,\cdots,n-1, \label{eq:vi}\\
	&\bm{v}_n=[0,\cdots,0,\eta_n^n]^\top, \label{eq:vn}\\
	&\bm{\Tilde{v}}_1=[\zeta_1^1,0,\cdots,0]^\top,\label{eq:inv-v1}\\
	&\bm{\Tilde{v}}_i=[0,\cdots,0,\zeta_i^j,\frac{\Tilde{\omega}_j-\eta_i^j\zeta_i^i\delta_i}{\eta_j^j\delta_j},-\frac{\sum_{k=i}^{j-1}\eta_k^j\zeta_i^k \delta_k}{\eta_j^j\delta_j},\cdots]^\top, \quad j=i,\quad i=2,\cdots,n-1,\label{eq:inv-vi}\\
	&\bm{\Tilde{v}}_n=[0,\cdots,0,\zeta_n^n]^\top,\label{eq:inv-n}\\
	&\zeta_i^i=\frac{1}{\eta_i^i}, \quad i=1,\cdots,n.\label{eq:inv-element}
\end{align}

\begin{proof}
The eigenvalues of $\bm{A}$ are the roots of its characteristic polynomial $\rho(\delta)$:
\begin{equation}\label{eq:eigenvalue-def}
\begin{aligned}
 \rho(\delta)=& \det(\bm{A}-\delta \bm{I}), 
\end{aligned}
\end{equation}
where $\delta$ is an unknown variable representing the unknown eigenvalues. According to Fact \ref{fact:det}, (\ref{eq:eigenvalue-def}) simplifies as 
\begin{equation}\label{eq:eigenvalue-pol}
\begin{aligned}
 \rho(\delta)=& \prod_{i=1}^{2}(\omega_i-\delta)\prod_{i=3}^{n}\Big((\Tilde{\omega}_i+\omega_i)-\delta)\Big).
\end{aligned}
\end{equation}
Solving (\ref{eq:eigenvalue-pol}) for its roots, the eigenvalues of $\bm{A}$ are obtained as (\ref{eq:eignvalue-delta}). As all of its eigenvalues are real and positive ($\delta_i>0$ for $i=1,\ldots,n$), the matrix $\bm{A}$ is positive definite.

From (\ref{eq:eigenvalue-pol}), it can be easily seen that the algebraic and geometric multiplicities of eigenvalue $\delta_i$ are both equal to 1 for all $i=1,\ldots,n$. Therefore, $\bm{A}\in \mathit{\mathbb{R}}^n$ has $n$ linearly independent eigenvectors and can be factorized as in (\ref{eq:factor}) where the modal matrix $\bm{V}=[\bm{v}_1,\cdots,\bm{v}_n]\in \mathit{\mathbb{R}}^n$ and diagonal matrix $\bm{D}=\mathrm{diag}(\delta_1,\cdots,\delta_n)\in \mathit{\mathbb{R}}^n$ are constituted from the eigenvectors and eigenvalues of $\bm{A}$, respectively.

Now, we show that the vectors $\bm{v}_i$ given by (\ref{eq:vi}) and (\ref{eq:vn}) are the right eigenvectors associated with $\delta_i$ ($i=2,\ldots,n$). The right eigenvector satisfies
\begin{equation}\label{eq:eign-vec}
(\bm{A}-\delta_i \bm{I})\bm{v}_i=\bm{0}.
\end{equation}
The $j$th row of $\bm{A}-\delta_i \bm{I}$ has two non-zero elements, namely $\Tilde{\omega}_j$ and $(\Tilde{\omega}_j+\omega_{j})-\delta_i$ at the indices $i-1$ and $i$, respectively. Then, the right side of (\ref{eq:eign-vec}) reduces to $\left[(\Tilde{\omega}_j+\omega_{j})-\delta_i\right]\eta_i^i$. Inside the bracket always equals to zero. Similarly, it can be verified that $\bm{v}_1$ given by (\ref{eq:v1}) satisfies (\ref{eq:eign-vec}). Therefore, $\bm{v}_i$ satisfies (\ref{eq:eign-vec}) for all $i=1,\ldots,n$. From (\ref{eq:factor}) and using Fact \ref{fact:inverse} the matrix $\bm{V}^{-1}$ can be determined in form of (\ref{eq:inv-v}) and (\ref{eq:inv-v1})-(\ref{eq:inv-element}). 
\end{proof}
\end{lemma}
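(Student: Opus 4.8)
The plan is to lean on the lower-triangular structure of $\bm{A}$ at every step. First I would read off the spectrum: since $\bm{A}$ is lower triangular, Fact \ref{fact:det} gives $\rho(\delta)=\det(\bm{A}-\delta\bm{I})$ as the product of the diagonal entries, namely $(\omega_1-\delta)(\omega_2-\delta)\prod_{i=3}^{n}\big((\tilde{\omega}_i+\omega_i)-\delta\big)$, so the eigenvalues are exactly the $\delta_i$ of (\ref{eq:eignvalue-delta}); they are strictly positive because $\omega_i,\tilde{\omega}_i>0$. Under the (generically valid) assumption that the $\delta_i$ are pairwise distinct --- equivalently, that every eigenvalue has algebraic and geometric multiplicity one --- $\bm{A}$ has $n$ independent eigenvectors and is therefore diagonalizable, which is what the factored form (\ref{eq:factor}) asserts; the remaining work is to pin down $\bm{V}$ and $\bm{V}^{-1}$ explicitly.

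Second, I would verify directly that the vectors in (\ref{eq:v1})--(\ref{eq:vn}) are eigenvectors. The mechanism is that row $j$ of $\bm{A}-\delta_i\bm{I}$ has at most two nonzero entries --- a subdiagonal $\tilde{\omega}_j$ in column $j-1$ (only when $j\ge 3$) and the diagonal entry $\delta_j-\delta_i$ in column $j$ --- while $\bm{v}_i$ is supported on coordinates $\ge i$. Consequently rows $j<i$ of $(\bm{A}-\delta_i\bm{I})\bm{v}_i$ vanish automatically; row $i$ vanishes because its diagonal entry is $\delta_i-\delta_i=0$ and (for $i\ge 3$) its subdiagonal multiplies the zero coordinate $v_{i,i-1}$; and for $j>i$ the equation reduces to $\tilde{\omega}_j\eta_i^{j-1}+(\delta_j-\delta_i)\eta_i^{j}=0$, which is exactly the recursion defining the later coordinates of $\bm{v}_i$ in (\ref{eq:vi}). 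This is also the step where distinctness is needed, since the denominators $\delta_j-\delta_i$ for $j>i$ must be nonzero; the corner cases $i=1$, $i=2$ and $i=n$ are the same row computation specialized to (\ref{eq:v1}), (\ref{eq:vi}) with $j=i$, and (\ref{eq:vn}).

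Third, linear independence comes for free from the staircase support: $\bm{V}=[\bm{v}_1,\ldots,\bm{v}_n]$ is upper triangular with diagonal $\eta_1^1,\ldots,\eta_n^n$, all nonzero, so by Fact \ref{fact:det} $\det\bm{V}=\prod_i\eta_i^i\ne 0$ and $\bm{V}$ is invertible; combined with $\bm{A}\bm{v}_i=\delta_i\bm{v}_i$ this yields (\ref{eq:factor}) with $\bm{D}=\mathrm{diag}(\delta_1,\ldots,\delta_n)$. For the shape of $\bm{V}^{-1}$ I would invoke Fact \ref{fact:inverse}: an upper-triangular nonsingular $\bm{V}$ has upper-triangular inverse, so $\bm{V}^{-1}=[\bm{\tilde{v}}_1,\ldots,\bm{\tilde{v}}_n]$ with each $\bm{\tilde{v}}_i$ supported on coordinates $\le i$; equating entries in $\bm{V}\bm{V}^{-1}=\bm{I}$ column by column then forces $\zeta_i^i=1/\eta_i^i$ on the diagonal and the recursion (\ref{eq:inv-vi}) for the remaining entries.

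The mathematically substantive point to flag is the distinctness / multiplicity-one hypothesis: without it the explicit eigenvector formulas develop zero denominators, and diagonalizability is not automatic for a bidiagonal matrix. Beyond that, the only real friction is the bookkeeping of index ranges in the closed forms (\ref{eq:vi}) and (\ref{eq:inv-vi}) and getting the $i=1,2,n$ boundary cases right; the underlying linear algebra is light.
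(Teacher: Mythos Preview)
Your proposal is correct and follows essentially the same route as the paper: read off the eigenvalues from the triangular diagonal via Fact~\ref{fact:det}, verify the eigenvector formulas by the two-term row recursion $(\bm{A}-\delta_i\bm{I})\bm{v}_i=\bm{0}$, and invoke Fact~\ref{fact:inverse} for the triangular shape of $\bm{V}^{-1}$; you are in fact more careful than the paper in flagging that the multiplicity-one / distinctness hypothesis is what keeps the denominators $\delta_j-\delta_i$ nonzero. One small slip to fix when you write it up: since each $\bm{v}_i$ is supported on coordinates $\ge i$, the modal matrix $\bm{V}$ is \emph{lower} triangular (not upper), and hence so is $\bm{V}^{-1}$, with each $\bm{\tilde{v}}_i$ supported on coordinates $\ge i$ as well.
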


Note that based on Lemma \ref{lemma:factor}, we have
\begin{align}\label{eq:factorsA}
   &\bm{A}^{\frac{1}{2}}=\bm{V}\bm{D}^{\frac{1}{2}}\bm{V}^{-1}, \quad \mathrm{sinh}(\bm{A}^{\frac{1}{2}})=\bm{V}\mathrm{sinh}(\bm{D}^{\frac{1}{2}})\bm{V}^{-1},\nonumber\\  &\mathrm{cosh}(\bm{A}^{\frac{1}{2}})=\bm{V}\mathrm{cosh}(\bm{D}^{\frac{1}{2}})\bm{V}^{-1},\quad (\mathrm{cosh}(\bm{A}^{\frac{1}{2}}))^{-1}=\bm{V}^{-1}\mathrm{cosh}(\bm{D}^{-\frac{1}{2}})\bm{V}.
\end{align}

The associated relative trajectories $y_i$'s with the Nash equilibrium are presented below.

\begin{theorem}\label{theorem:TPF}
 For an $n$-vehicle platoon under the TPF topology defined as the noncooperative differential game (\ref{eq:opt-TPF}):  
\begin{enumerate}
    \item There is a unique Nash equilibrium. 
    \item The associated relative displacement trajectories with the Nash equilibrium are given by
\begin{align}\label{eq:trajectories-TPF}
    y_i(t)=\left\{
    	\begin{array}{ll}
		\alpha_i^i(t)y_i(0)+(\alpha_i^i(t)-1)d_i \quad &\quad \mbox{$i=1,2$ } \\
		&\\
		 \sum_{k=2}^i\alpha_k^i(t)y_k(0)+\sum_{k=2}^{i-1}\alpha_k^i(t)d_k+(\alpha_i^i(t)-1)d_i \quad & \quad\mbox{$i=3,\cdots,n$ }, \\
	\end{array}
	\right.
\end{align}
where
\begin{align}
    &    \alpha_i^i(t)=
\left\{
    	\begin{array}{ll}
		\frac{\mathrm{cosh}(\sqrt{\omega_i}(t_f-t))}{\mathrm{cosh}(\sqrt{\omega_i}t_f)}  \quad &\quad \mbox{$i=1,2$ } \\
		&\\
		 \frac{\mathrm{cosh}(\sqrt{\Tilde{\omega}_i+\omega_i}(t_f-t))}{\mathrm{cosh}(\sqrt{\Tilde{\omega}_i+\omega_i}t_f)} \quad & \quad\mbox{$i=3,\cdots,n$ } \\
	\end{array}
	\right.\label{eq:alpha_i^i}\\
    &\alpha_i^j(t)=\sum_{k=i}^j \eta_k^j\zeta_i^k \alpha_k^k(t) \quad  \quad, i\neq j, i=2,\cdots,n, j=3,\cdots,n\label{eq:alpha_i^j}.
\end{align}
\end{enumerate}
\end{theorem}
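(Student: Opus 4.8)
The plan is to mirror the proof of Theorem~\ref{theorem:PF}, the only genuinely new ingredient being the factorization $\bm{A}=\bm{V}\bm{D}\bm{V}^{-1}$ from Lemma~\ref{lemma:factor}, which plays the role that the trivial diagonalization played in the PF case. First I would form, for each player $i$, the Hamiltonian $H_i=\tfrac{1}{2}\big(\omega_i(y_i-d_i)^2+\tilde\omega_i((y_i-d_i)+(y_{i-1}-d_{i-1}))^2+u_i^2\big)+\lambda_i u_i$ and apply Pontryagin's minimum principle in the open-loop pattern: since player $i$ controls only $u_i$, $\partial H_i/\partial u_i=0$ gives $u_i=-\lambda_i$, and the adjoint equation is $\dot\lambda_i=-\partial H_i/\partial y_i$ with $\lambda_i(t_f)=0$. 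Computing $\partial H_i/\partial y_i$ yields the PF-type adjoint $\dot\lambda_i=-\omega_i(y_i-d_i)$ for $i=1,2$ (consistently with the first two rows of $\bm{A}$) and $\dot\lambda_i=-(\omega_i+\tilde\omega_i)(y_i-d_i)-\tilde\omega_i(y_{i-1}-d_{i-1})$ for $i\ge 3$. Together with $\dot y_i=-\lambda_i$, stacking into $\bm{y},\bm{\lambda}\in\mathbb{R}^n$ gives exactly the coupled linear boundary value problem $\dot{\bm{y}}=-\bm{\lambda}$, $\dot{\bm{\lambda}}=-\bm{A}\bm{y}+\bm{\omega}$ (the same shape as (\ref{eq:state-equation})), now with $\bm{A}$ the matrix of Lemma~\ref{lemma:factor}, $\bm{\omega}=\bm{A}\bm{d}$, initial condition $\bm{y}(0)$, and terminal condition $\bm{\lambda}(t_f)=\bm{0}$.

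Next I would settle existence and uniqueness. Because $\omega_i>0$ and $\tilde\omega_i>0$, player $i$'s cost is strictly convex in $u_i$ for any fixed choice of the other players' controls (quadratic running cost, linear dynamics), so the Pontryagin conditions are also sufficient and they characterize the open-loop Nash equilibrium, which is the setting of Section~6.5.1 of \cite{bacsar1998dynamic}. Hence it suffices that the boundary value problem have a unique solution. Solving it by the same Laplace-domain computation as in Theorem~\ref{theorem:PF} produces transition operators built from $\mathrm{cosh}(\bm{A}^{\frac{1}{2}}t)$ and $\mathrm{sinh}(\bm{A}^{\frac{1}{2}}t)$, and the terminal condition determines $\bm{\lambda}(0)$ as in (\ref{eq:initial-lambda-PF}), which is well defined precisely when $\bm{\Phi}_{22}(t_f)=\mathrm{cosh}(\bm{A}^{\frac{1}{2}}t_f)$ is nonsingular. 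By Lemma~\ref{lemma:factor} and (\ref{eq:factorsA}), $\mathrm{cosh}(\bm{A}^{\frac{1}{2}}t_f)=\bm{V}\,\mathrm{diag}(\mathrm{cosh}(\sqrt{\delta_i}\,t_f))\,\bm{V}^{-1}$ with every $\delta_i>0$, so its eigenvalues $\mathrm{cosh}(\sqrt{\delta_i}\,t_f)$ are positive and it is invertible for all $t_f>0$; this proves part~1, and with convexity it shows the unique solution of the boundary value problem is the unique Nash equilibrium.

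For the closed form I would change basis by $\bm{z}=\bm{V}^{-1}\bm{y}$, $\bm{\mu}=\bm{V}^{-1}\bm{\lambda}$. Using $\bm{V}^{-1}\bm{A}=\bm{D}\bm{V}^{-1}$ and $\bm{\omega}=\bm{A}\bm{d}$, the boundary value problem decouples into the $n$ scalar problems $\dot z_i=-\mu_i$, $\dot\mu_i=-\delta_i z_i+\delta_i(\bm{V}^{-1}\bm{d})_i$, $z_i(0)=(\bm{V}^{-1}\bm{y}(0))_i$, $\mu_i(t_f)=0$, each of the exact form treated in Theorem~\ref{theorem:PF} with $\omega_i$ replaced by $\delta_i$ and $d_i$ by $(\bm{V}^{-1}\bm{d})_i$. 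Hence $z_i(t)=\alpha_i^i(t)z_i(0)+(\alpha_i^i(t)-1)(\bm{V}^{-1}\bm{d})_i$ with $\alpha_i^i(t)=\mathrm{cosh}(\sqrt{\delta_i}(t_f-t))/\mathrm{cosh}(\sqrt{\delta_i}t_f)$ as in (\ref{eq:alpha_i^i}), the needed hyperbolic identity being $\mathrm{cosh}(\sqrt{\delta_i}t)-\mathrm{sinh}(\sqrt{\delta_i}t)\tanh(\sqrt{\delta_i}t_f)=\mathrm{cosh}(\sqrt{\delta_i}(t_f-t))/\mathrm{cosh}(\sqrt{\delta_i}t_f)$, exactly as in Theorem~\ref{theorem:PF}. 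Transforming back by $\bm{y}(t)=\bm{V}\bm{z}(t)$ and inserting the explicit sparse entries $\eta_i^j,\zeta_i^j$ of $\bm{V},\bm{V}^{-1}$ from Lemma~\ref{lemma:factor}: both matrices are lower triangular with columns $\bm{v}_j,\bm{\tilde v}_j$ supported on indices $\ge j$ and the first coordinate is decoupled, so $y_i$ involves only $y_2(0),\dots,y_i(0)$ and $d_2,\dots,d_i$ for $i\ge 3$ (and only $y_i(0),d_i$ for $i=1,2$, where the $\eta$--$\zeta$ factors cancel and one recovers the PF-type formula). Collecting the coefficient of $y_k(0)$ gives $\sum_{\ell=k}^{i}\eta_\ell^{\,i}\zeta_k^{\,\ell}\alpha_\ell^\ell(t)=\alpha_k^i(t)$ as in (\ref{eq:alpha_i^j}); the $(\alpha_i^i(t)-1)$ factors produce, through $\bm{y}(t)=\bm{V}\bm{z}(t)$, a $d$-dependent contribution equal to $\sum_k\alpha_k^i(t)d_k-d_i$ in the $i$th component (the $-d_i$ being $-(\bm{V}\bm{V}^{-1}\bm{d})_i$), which is exactly the $d$-terms in (\ref{eq:trajectories-TPF}). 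This gives (\ref{eq:trajectories-TPF}).

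I expect the back-transformation step to be the main obstacle: although the triangular/sparse structure of $\bm{V}$ and $\bm{V}^{-1}$ (columns supported on indices no smaller than their own, decoupled first coordinate, $\zeta_i^i=1/\eta_i^i$) forces the claimed dependence pattern, carefully verifying that the coefficients reorganize exactly into $\alpha_i^j(t)=\sum_{k=i}^{j}\eta_k^j\zeta_i^k\alpha_k^k(t)$ and that the constant ($d$) terms collapse as stated is an intricate index-bookkeeping computation. Everything else is a faithful transcription of the PF argument pushed through the change of basis supplied by Lemma~\ref{lemma:factor}.
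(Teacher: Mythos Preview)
Your proposal is correct and follows essentially the same route as the paper: form the Hamiltonians, derive the coupled two-point boundary value problem $\dot{\bm{y}}=-\bm{\lambda}$, $\dot{\bm{\lambda}}=-\bm{A}\bm{y}+\bm{A}\bm{d}$ with the $\bm{A}$ of Lemma~\ref{lemma:factor}, invoke the diagonalization $\bm{A}=\bm{V}\bm{D}\bm{V}^{-1}$ to show $\mathrm{cosh}(\bm{A}^{1/2}t_f)$ is invertible (uniqueness) and to reduce the solution to $\bm{y}(t)=\bm{P}\bm{y}(0)+(\bm{P}-\bm{I})\bm{d}$ with $\bm{P}=\bm{V}\,\mathrm{diag}(\alpha_i^i(t))\,\bm{V}^{-1}$. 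The only cosmetic difference is that you phrase the diagonalization as an explicit change of variables $\bm{z}=\bm{V}^{-1}\bm{y}$ that decouples the system into $n$ scalar PF-type problems, whereas the paper manipulates the matrix hyperbolic functions directly via (\ref{eq:factorsA}); both computations are identical in content, and your anticipated ``index-bookkeeping'' step is exactly the paper's reading off of the entries of $\bm{P}$.
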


\begin{proof} 
Applying the necessary conditions for optimality on the Hamiltonians yields
\begin{align} \label{eq:neccesary-TPF}
    &u_i=-\lambda_i\\\label{eq:neccesary-lam2}
        &\dot \lambda_i=-\omega_{i}(y_{i}-d_{i})-\Tilde{\omega}_i(y_i-d_i+y_{i-1}- d_{i-1}),\quad \lambda_i(t_f)=0 \\
\mbox{and}\nonumber\\
    &\dot y_i= -\lambda_i,\quad y_i(0)=x_i(0)-x_{i-1}(0) 
\end{align}
for $i=1,\ldots,n$.

Applying the terminal condition $\bm{\lambda}(t_f)={\bf 0}$ and using (\ref{eq:factorsA}), vector $\bm{\lambda}(0)$ after simplification is obtained as
\begin{align}\label{eq:lambda0}
    \bm{\lambda}(0)=&(\mathrm{cosh}(\bm{A}^{\frac{1}{2}}t_f))^{-1} \left[\bm{A}^{\frac{1}{2}}\mathrm{sinh}(\bm{A}^{\frac{1}{2}}t_f)\bm{y}(0) +\mathrm{sinh}(\bm{A}^{\frac{1}{2}}t_f)\bm{A}^{-\frac{1}{2}}\bm{\omega}\right] \nonumber\\
    =& \bm{V}(\mathrm{cosh}(\bm{D}^{\frac{1}{2}}t_f))^{-1}\bm{V}^{-1} \left[\bm{V}\bm{D}^{\frac{1}{2}}\mathrm{sinh}(\bm{D}^{\frac{1}{2}}t_f)\bm{V}^{-1}\bm{y}(0) +\bm{V}\bm{D}^{\frac{1}{2}}\mathrm{sinh}(\bm{D}^{\frac{1}{2}}t_f)\bm{V}^{-1}\bm{d}\right]\nonumber\\
    =& \bm{V}(\mathrm{cosh}(\bm{D}^{\frac{1}{2}}t_f))^{-1}\bm{D}^{\frac{1}{2}}\mathrm{sinh}(\bm{D}^{\frac{1}{2}}t_f)\bm{V}^{-1}\left[\bm{y}(0) +\bm{d}\right].
\end{align}

Substituting $\bm{\lambda}(0)$ from (\ref{eq:lambda0}), $\bm{y}$ is simplified as
\begin{align}
   \label{eq:whole-trajectories}
      \bm{y}=&\mathrm{cosh}(\bm{A}^{\frac{1}{2}}t)\bm{y}(0)-\mathrm{sinh}(\bm{A}^{\frac{1}{2}}t)\bm{A}^{-\frac{1}{2}}\bm{\lambda}(0)-(\bm{I}-\mathrm{cosh}(\bm{A}^{\frac{1}{2}}t))\bm{d}\nonumber\\
    =&\bm{V}\mathrm{cosh}(\bm{D}^{\frac{1}{2}}t)\bm{V}^{-1}\bm{y}(0)-\bm{V}\mathrm{sinh}(\bm{D}^{\frac{1}{2}}t)\bm{D}^{-\frac{1}{2}}\bm{V}^{-1}\nonumber\\
    &\left[ \bm{V}(\mathrm{cosh}(\bm{D}^{\frac{1}{2}}t_f))^{-1}\bm{D}^{\frac{1}{2}}\mathrm{sinh}(\bm{D}^{\frac{1}{2}}t_f)\bm{V}^{-1}(\bm{y}(0)+\bm{d})\right]+\bm{V}\mathrm{cosh}(\bm{D}^{\frac{1}{2}}t)\bm{V}^{-1}\bm{d}-\bm{d} \nonumber\\
        =&\left[ \bm{V}\mathrm{cosh}(\bm{D}^{\frac{1}{2}}t)\bm{V}^{-1}-\bm{V}(\mathrm{cosh}(\bm{D}^{\frac{1}{2}}t_f))^{-1}\mathrm{sinh}(\bm{D}^{\frac{1}{2}}t)\mathrm{sinh}(\bm{D}^{\frac{1}{2}}t_f)\bm{V}^{-1}\right]\bm{y}(0)+\nonumber\\
    &\left[\bm{V}\mathrm{cosh}(\bm{D}^{\frac{1}{2}}t)\bm{V}^{-1}-\bm{V}(\mathrm{cosh}(\bm{D}^{\frac{1}{2}}t_f))^{-1}\mathrm{sinh}(\bm{D}^{\frac{1}{2}}t)\mathrm{sinh}(\bm{D}^{\frac{1}{2}}t_f)\bm{V}^{-1} -\bm{I}\right]\bm{d}.
\end{align}

Let
\begin{align}
   \bm{P}=&\bm{V}\mathrm{cosh}(\bm{D}^{\frac{1}{2}}t)\bm{V}^{-1}-\bm{V}(\mathrm{cosh}(\bm{D}^{\frac{1}{2}}t_f))^{-1}\mathrm{sinh}(\bm{D}^{\frac{1}{2}}t)\mathrm{sinh}(\bm{D}^{\frac{1}{2}}t_f)\bm{V}^{-1}\nonumber\\
   =&\bm{V}\left[\mathrm{cosh}(\bm{D}^{\frac{1}{2}}t)-(\mathrm{cosh}(\bm{D}^{\frac{1}{2}}t_f))^{-1}\mathrm{sinh}(\bm{D}^{\frac{1}{2}}t)\mathrm{sinh}(\bm{D}^{\frac{1}{2}}t_f)\right]\bm{V}^{-1}\nonumber\\
    =&\bm{V}\left[\mathrm{cosh}(\bm{D}^{\frac{1}{2}}(t_f-t))(\mathrm{cosh}(\bm{D}^{\frac{1}{2}}t_f))^{-1}\right]\bm{V}^{-1}\nonumber\\
    =&\bm{V}\bm{\Delta}\bm{V}^{-1}.
\end{align}
Here, $\bm{\Delta}=\mathrm{diag}[\cdots,\alpha_i^i,\cdots]$ where $\alpha_i^i$ are given by (\ref{eq:alpha_i^i}) and $\bm{P}$ has the following form
\begin{align}
&\bm{P}=\begin{bmatrix}
    \alpha_1^1 & 0 & 0 & 0 &\cdots & 0 \\
    0 & \alpha_2^2 & 0 & 0 &\cdots & 0  \\
    0 & \alpha_2^3 & \alpha_3^3 & 0 & \cdots & 0  \\
    \vdots & \vdots & &\ddots & &\vdots  \\
    0 & \alpha_2^n& \alpha_3^n&  \cdots & \alpha_{n-1}^n  &  \alpha_n^n  
    \end{bmatrix},
\end{align}
where its non-diagonal nonzero elements $\alpha_i^j$ are given by (\ref{eq:alpha_i^j}).

Now, we get vector $\bm{y}$ in terms of $\bm{P}$ as  
\begin{align}
   \label{eq:matrix-trajectories}
      \bm{y}=&\bm{P}\bm{y}(0)+(\bm{P} -\bm{I})\bm{d}.
\end{align}
From (\ref{eq:matrix-trajectories}) the individual trajectories $y_i$'s are derived as in (\ref{eq:trajectories-TPF}).
\end{proof}

\subsection{General Topology}

In this study, we assume that the vehicles are connected via the rearward V2V and sensor links, so the information flow is only backward in the opposite direction of the traffic flow. In the following, we extend the proposed differential game approach to platoon formation control from the PF and TPF topologies to general topology.          

The platoon formation control cost function for vehicle $i$ ($i=1,\ldots,n$) under general topology is defined
\begin{equation}\label{eq:costs2}
    J_i(x_{1},\cdots,x_{n},u_i)=\frac{1}{2}\int_0^{t_f}\Big(\sum_{j\in \mathcal{N}_i}\omega_{ij}\big(x_{i}-x_{j}-\sum_{k=j+1}^{i} d_k\big)^2+u_i^2\Big)~\mathrm{dt},
\end{equation}
where $\omega_{ij}\geq 0$ and $\mathcal{N}_i$ is the set of vehicles in the platoon directly connected to the vehicle $i$ either via a sensor link or (rearward) V2V connection link. 

The platoon formation control under general topology as a noncooperative differential game can be redefined as the minimization of the following optimization
\begin{align}\label{eq:opt-gen-top}
    &\min_{u_i}~  J_i(y_{j\in \mathcal{N}_i},u_i)=\frac{1}{2}\int_0^{t_f}\Big(\sum_{j\in \mathcal{N}_i}\omega_{ij}\left[\sum_{k=j+1}^{i} (y_k-d_k)\right]^2+u_i^2\Big)~\mathrm{dt}\\
    &\mbox{s.t.}\notag\\ 
    &\quad \dot y_i =u_i , \quad, y_i(0)=x_i(0)-x_{i-1}(0), \quad i=1,\ldots,n. \nonumber
\end{align}

The information structure of the proposed differential games under the PF and TPF topologies depicted in $\bm{A}$ is rather straightforward. However, due to its complex information structure, finding closed-form individual trajectories for platoon formation control under general topology as a noncooperative differential game (\ref{eq:opt-gen-top}) is highly analytically intricate. The following theorem presents results regarding the solution of this optimization.

\begin{theorem}\label{theorem:general-top}
 For an $n$-vehicle platoon under general topology defined as the noncooperative differential game (\ref{eq:opt-gen-top}):
 \begin{enumerate}
    \item There is a unique Nash equilibrium of relative displacements control actions $u_i$'s. 
    \item The associated relative displacement trajectories with the Nash equilibrium are obtained as in (\ref{eq:PF-trajectory}).
\end{enumerate}
\end{theorem}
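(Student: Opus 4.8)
The plan is to mirror the structure of the proof of Theorem~\ref{theorem:PF}, but now with a non-diagonal system matrix encoding the general information topology. First I would write the Hamiltonian for each vehicle $i$ from the cost in (\ref{eq:opt-gen-top}), namely
\begin{equation*}
H_i = \tfrac{1}{2}\Big(\sum_{j\in\mathcal{N}_i}\omega_{ij}\big[\textstyle\sum_{k=j+1}^{i}(y_k-d_k)\big]^2+u_i^2\Big)+\lambda_i u_i,
\end{equation*}
and apply Pontryagin's necessary conditions $\partial H_i/\partial u_i=0$ and $\dot\lambda_i=-\partial H_i/\partial y_i$. This yields $u_i=-\lambda_i$ and a costate equation $\dot\lambda_i=-\sum_{j\in\mathcal{N}_i}\omega_{ij}\sum_{k=j+1}^{i}(y_k-d_k)$, with terminal condition $\lambda_i(t_f)=0$. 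Stacking these over $i=1,\dots,n$ produces a linear two-point boundary value problem of exactly the form (\ref{eq:state-equation}), i.e.
\begin{equation*}
\begin{bmatrix}\dot{\bm y}\\ \dot{\bm\lambda}\end{bmatrix}
=\begin{bmatrix}{\bf 0} & -\bm I\\ -\bm A & {\bf 0}\end{bmatrix}\begin{bmatrix}\bm y\\ \bm\lambda\end{bmatrix}+\begin{bmatrix}{\bf 0}\\ \bm A\bm d\end{bmatrix},
\end{equation*}
where now $\bm A$ is the (generally non-diagonal, lower-triangular) matrix whose $(i,k)$ entry aggregates the weights $\omega_{ij}$ over all $j\in\mathcal{N}_i$ with $j<k\le i$. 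The key structural observation to record is that $\bm A$ is lower triangular with strictly positive diagonal entries $a_{ii}=\sum_{j\in\mathcal{N}_i}\omega_{ij}>0$ (assuming each vehicle has at least one incoming link), hence invertible; moreover it is similar to a positive definite matrix or, more directly, the game is convex because each $J_i$ is a positive semidefinite quadratic in $u_i$ plus a strictly convex $\int u_i^2$ term, so the stationarity conditions are also sufficient.

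Next I would solve the boundary value problem exactly as in Theorem~\ref{theorem:PF}: pass to the Laplace domain, invert the $2n\times2n$ block matrix $\begin{bmatrix}s\bm I & \bm I\\ \bm A & s\bm I\end{bmatrix}$ to get the block $(s^2\bm I-\bm A)^{-1}$ structure, and take the inverse Laplace transform to obtain the transition matrix $\bm\Phi(t)$ with blocks $\cosh(\bm A^{1/2}t)$, $-\sinh(\bm A^{1/2}t)\bm A^{-1/2}$, etc., together with the forcing term $\bm\Psi(t,0)\bm\omega$. Here $\bm A^{1/2}$ is well-defined: since $\bm A$ is lower triangular with positive diagonal, it has $n$ eigenvalues equal to its diagonal entries, all positive, so a principal square root exists (one can invoke Lemma~\ref{lemma:factor}-style diagonalization $\bm A=\bm V\bm D\bm V^{-1}$ when $\bm A$ is diagonalizable, or use the holomorphic functional calculus on the spectrum, which avoids the diagonalizability question). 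Then I would impose $\bm\lambda(t_f)={\bf 0}$ to solve for $\bm\lambda(0)=\bm\Phi_{22}^{-1}(t_f)[-\bm\Phi_{21}(t_f)\bm y(0)-\bm\Psi_2(t_f,0)\bm\omega]$, exactly (\ref{eq:initial-lambda-PF}). The invertibility of $\bm\Phi_{22}(t_f)=\cosh(\bm A^{1/2}t_f)$ for every $t_f>0$ follows because its eigenvalues are $\cosh(\sqrt{a_{ii}}\,t_f)\ge 1>0$; this invertibility is precisely the statement that the BVP has a unique solution, which in turn is equivalent to existence and uniqueness of the Nash equilibrium (part 1). Substituting $\bm\lambda(0)$ back into the expression for $\bm y$ and simplifying gives the closed-form trajectory, which has the same algebraic shape as (\ref{eq:PF-trajectory}) — hence part 2.

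The main obstacle, and the point that needs the most care, is justifying that all the hyperbolic-function manipulations go through when $\bm A$ is no longer diagonal. In Theorem~\ref{theorem:PF} everything commutes trivially because $\bm A$ is diagonal; in Theorem~\ref{theorem:TPF} one leans on the explicit eigendecomposition from Lemma~\ref{lemma:factor}. For a truly general lower-triangular $\bm A$ one must either (a) argue that $\bm A$ is still diagonalizable — which holds generically but can fail if two diagonal entries coincide and the off-diagonal structure forces a Jordan block — or (b) work spectrum-theoretically: all matrix functions appearing ($\cosh(\bm A^{1/2}t)$, $\sinh(\bm A^{1/2}t)\bm A^{-1/2}$, $\cosh(\bm A^{1/2}t_f)^{-1}$) are analytic functions of the single matrix $\bm A$, hence commute with one another, so the telescoping identity $\cosh(\bm A^{1/2}t)-\cosh(\bm A^{1/2}t_f)^{-1}\sinh(\bm A^{1/2}t)\sinh(\bm A^{1/2}t_f)=\cosh(\bm A^{1/2}(t_f-t))\cosh(\bm A^{1/2}t_f)^{-1}$ remains valid verbatim, and the spectral mapping theorem guarantees $\cosh(\bm A^{1/2}t_f)$ is invertible since $0$ is not in the image of the spectrum of $\bm A$ under $z\mapsto\cosh(\sqrt z\,t_f)$. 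Option (b) is cleaner and I would use it; the remaining work — writing out $\bm\Psi$, substituting, and collecting terms into the form (\ref{eq:PF-trajectory}) — is then routine linear algebra identical in spirit to the PF case, and the convexity remark (each $J_i$ strictly convex in $u_i$) closes the uniqueness claim independently of the BVP argument, giving a second, self-contained route to part~1.
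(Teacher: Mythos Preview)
Your proposal is correct and follows the same overall route as the paper: derive the costate equation $\dot\lambda_i=-\sum_{j\in\mathcal N_i}\omega_{ij}\sum_{k=j+1}^{i}(y_k-d_k)$, assemble the two-point boundary value problem (\ref{eq:state-equation}) with a lower-triangular $\bm A$ whose diagonal entries are $\sum_{j\in\mathcal N_i}\omega_{ij}>0$, and establish existence and uniqueness of the Nash equilibrium via invertibility of $\bm\Phi_{22}(t_f)=\cosh(\bm A^{1/2}t_f)$, which then yields (\ref{eq:PF-trajectory}) through (\ref{eq:initial-lambda-PF}).

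The one substantive difference is exactly at the step you flag as the main obstacle. The paper takes your option~(a): it asserts that each eigenvalue $\delta_i=\sum_{j\in\mathcal N_i}\omega_{ij}$ has algebraic and geometric multiplicity~$1$, diagonalizes $\bm A=\bm V\bm D\bm V^{-1}$, and reads off $\bm A^{\pm1/2}$ and $\bm\Phi_{22}^{-1}(t_f)$ from (\ref{eq:factorsA}). That multiplicity assertion is not justified in the paper and fails whenever two diagonal entries of $\bm A$ coincide, so your option~(b) --- treating $\cosh(\bm A^{1/2}t)$, $\sinh(\bm A^{1/2}t)\bm A^{-1/2}$, and $\cosh(\bm A^{1/2}t_f)^{-1}$ as analytic functions of the single matrix $\bm A$ (hence mutually commuting) and invoking the spectral mapping theorem for invertibility --- is genuinely more general and in fact patches a gap the paper leaves open. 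Your separate convexity argument for uniqueness does not appear in the paper either; the paper relies solely on the invertibility of $\bm\Phi_{22}(t_f)$.
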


\begin{proof}
Applying the necessary conditions for optimality on the Hamiltonian yields
\begin{align} \label{eq:lambda-gen-top}
    &\dot \lambda_i=-\sum_{j\in \mathcal{N}_i}\omega_{ij}\sum_{k=j+1}^{i} (y_k-d_k),\quad \lambda_i(t_f)=0, 
\end{align}
for $i=1,\ldots,n$.

From (\ref{eq:lambda-gen-top}), the platoon information matrix $\bm{A}\in \mathit{\mathbb{R}}^n$ under general topology is defined as
\begin{align}
    \bm{A}=&\begin{bmatrix}
    \sum_{j\in N_1}w_{1j} & 0 & \cdots & 0 \\
    \sum_{j\in N_2,j<1}w_{2j} & \sum_{j\in N_2}w_{2j} & 0 & 0  \\
    \vdots & \vdots &\ddots & \vdots  \\
    \sum_{j\in N_n, j<1} w_{nj} & \sum_{j\in N_n, j<2} w_{nj} & \cdots  & \sum_{j\in N_n} w_{nj} 
    \end{bmatrix}.
\end{align}
 
 Following the solution procedure in proof of Theorem \ref{theorem:PF}, the differential equation (\ref{eq:state-equation}) can be formed where its solution is given by (\ref{eq:state-sol}). From this equation, the associated platoon's relative displacement trajectories with the Nash equilibrium are obtained as in (\ref{eq:PF-trajectory}). It can be seen that the calculation of the trajectories requires substituting the vector $\bm{\lambda}(0)$ from (\ref{eq:initial-lambda-PF}). This vector exists if and only if matrices $\bm{\Phi}_{22}^{-1}(t_f)$ and $\bm{A}^{\pm\frac{1}{2}}$ exist. 
 
 The eigenvalues of $\bm{A}$ are obtained from its characteristic polynomial (\ref{eq:eigenvalue-def}). This equation simplifies as 
\begin{equation}\label{eq:eigenvalue-pol2}
\begin{aligned}
 \rho(\delta)=& \prod_{i=1}^{n}(\sum_{j\in N_i} w_{ij}-\delta).
\end{aligned}
\end{equation}
From (\ref{eq:eigenvalue-pol2}), the eigenvalues of $\bm{A}$ are $\delta_i=\sum_{j\in N_i} w_{ij}$ where all are real and positive ($\delta_i>0$ for $i=1,\ldots,n$) and $\bm{A}$ is positive definite. Additionally, we can see that both algebraic and geometric multiplicities of eigenvalue $\delta_i$ are 1 for all $i=1,\ldots,n$. Thus, there exist a diagonal matrix $\bm{D}$ and a modal matrix $\bm{V}$ in form of (\ref{eq:diag-D}) and (\ref{eq:modal-V}) and are constituted from the eigenvectors and eigenvalues of $\bm{A}$, respectively, and  $\bm{A}$ is factorized as in (\ref{eq:factor}). Consequently, matrices $\bm{\Phi}_{22}^{-1}(t_f)$ and $\bm{A}^{\pm\frac{1}{2}}$ exist and can be calculated from (\ref{eq:factorsA}).
\end{proof}

\subsection{Stability Analysis}\label{sec:stability}

Individual stability refers to vehicles' relative displacement trajectories $y_i(t)$s converging to the desired inter-vehicles distancing policies $d_i$s. String stability simply refers to distancing errors and disturbances not amplifying when propagating along the vehicle string \cite{FENG201981}. In the following, we discuss both types of stability of the platooning system under the proposed differential game framework.

Let $e_i(t)=y_i(t)+d_i$ be the spacing error function for the $i$th vehicle. Individual stability can be defined as the asymptotic behavior analysis of the error function at a steady state
\begin{equation}\label{eq:error-asym}
    \lim_{t\to\infty} e_i(t)=0, \quad i=1,\ldots,n.
\end{equation}
For the PF topology, the asymptotic behavior analysis reveals
\begin{align}
    \lim_{t\to\infty} e_i(t)&=(y_i(0)+d_i)\lim_{t\to\infty}\alpha_i(t)\\
    &=(y_i(0)+d_i)\lim_{t\to\infty}\frac{\mathrm{cosh}(\sqrt{\omega_i}(t_f-t))}{\mathrm{cosh}(\sqrt{\omega_i}t_f)}=(y_i(0)+d_i)\frac{1}{\infty}=0.
\end{align}
Therefore, internal stability under the PF topology is achieved as each vehicle's error decays to zero at a steady state. The same conclusion can be similarly made for the TPF topology. Since the individual trajectories are not available for the general topology, the asymptotic behavior analysis in (\ref{eq:error-asym}) can not be applied. One can model the general platoon communication topology by the directed graphs. Thereby the closed-loop system dynamics can be governed by matrix equations in a compact form for individual stability analysis \cite{9477303,Zheng1,Zheng2}.

We assume that the directed graph $\mathcal{G(V,E)}$ which consists of a set of vertices $\mathcal{V}$ and edges $\mathcal{E} \subseteq \{(i,j):i,j \in \mathcal{V},j \neq i\}$ governs the communication topology of the platoon. The set of vertices $\mathcal{V}$ and edges $\mathcal{E}$ correspond to the vehicles in the platoon and the sensor/V2V connection links, respectively. It is necessary to assume that the platoon topology graph $ \mathcal{G}$ is connected, i.e., for every pair of vertices $ (i,j) \in \mathcal{V}$, from $ i$ to $ j$ for all $ j  \in  \mathcal{V}, j \neq  i $ , there exists a path of (undirected) edges from $\mathcal{E}$. The topology Laplacian matrix $\bm{L}\in \mathit{\mathbb{R}}^n$ of a platoon is defined as
\begin{equation}
    \bm{L}=\bm{D}^\top\bm{W}\bm{D},
\end{equation}
where $\bm{W}=\mathrm{diag}(…,\omega_{ij},…) \in \mathit{\mathbb{R}}^{\mathcal{|E|}}$, $\forall (i,j) \in \mathcal{E}$. Matrix $\bm{D}  \in \mathit{\mathbb{R}} ^{n\times\mathcal{|E|}}$ is the incidence matrix of $\mathcal{G}$ where $\bm{D}$'s $uv$th element is 1 if the node $u$ is the head of the edge $v$, -1 if the node $u$ is the tail, and 0, otherwise.
The Laplacian matrix $\bm{L}$ is symmetric, positive semi-definite, and satisfies the sum-of-squares (SOS) property \cite{Olfati-Saber}:
\begin{equation} \label{eq:sum-of-squares}
\sum_{(i,j)\in \mathcal{E}} \omega_{ij} (x_i-x_j)^2 =\bm{x}^\top \bm{L}\bm{x},        
\end{equation}
where $ \bm{x}=[x_0,\ldots,x_n,1 ]^\top \in \mathit{\mathbb{R}}^{n+2}$ represents the platoon's state vector. The constant value 1 here extends the dimension of the state vector for further analysis and does not affect the system behavior. 

By using the SOS property of the Laplacian matrix, the platoon formation control cost function for vehicle $i$ ($i=1,\ldots,n$) under general topology in (\ref{eq:costs2}) can be converted to a compact form as follows
\begin{align}\label{eq:cost-compact}
    J_i(x_{1},\cdots,x_{n},u_i)&=\frac{1}{2}\int_0^{t_f}(\sum_{j\in \mathcal{N}_i}\omega_{ij}(x_{i}-x_{j}-\sum_{k=j+1}^{i} d_k)^2+u_i^2)~dt \nonumber\\
    &=  \frac{1}{2}\int_0^{t_f}(\sum_{j\in \mathcal{N}_i}\omega_{ij}(x_{i}-x_{j})^2-2\sum_{j\in \mathcal{N}_i}\omega_{ij}(x_{i}-x_{j})d_{ij})+\sum_{j\in \mathcal{N}_i}\omega_{ij}d_{ij}^2+u_i^2)~dt\nonumber\\
   &=\frac{1}{2}\int_0^{t_f}(\bm{x}^\top \bm{L}_i \bm{x}-2\bm{x}^\top \bm{D}\bm{W}_i\bm{d}+\bm{d}^\top \bm{W}_i \bm{d}+u_i^2)~dt\nonumber\\
   &=\frac{1}{2}\int_0^{t_f}(\bm{x}^\top \bm{Q}_i \bm{x}+u_i^2)~dt,
\end{align}
where 
$\bm{Q}_i=\begin{bmatrix}
\bm{L}_i  & -\bm{D}\bm{W}_i\bm{d}\\
-(\bm{D}\bm{W}_i\bm{d})^\top & \bm{d}^\top\bm{W}_i \bm{d} 
\end{bmatrix}$
, $\bm{L}_i=\bm{D}\bm{W}_i\bm{D}^T$, $\bm{W}_i=\mathrm{diag}(0,…,\omega_{ij},…,0) \in \mathit{\mathbb{R}}^{\mathcal{|E|}}$, $j\in \mathcal{N}_i$ and $\bm{d}=[…,d_{ij},…]^\top\in \mathit{\mathbb{R}}^{\mathcal{|E|}} $, $(i,j)\in \mathcal{E}$, $d_{ij}=\sum_{k=j+1}^{i} d_k$.

The longitudinal inter-vehicles displacement dynamics in (\ref{eq:pairwise-dyn}) can be rewritten as
\begin{equation}\label{eq:dynamics-u}
    \dot {x}_i=\sum_{j=1}^i u_j, \quad i=1,\ldots,n. 
\end{equation}
Note that according to Assumption \ref{leader-dynamics} we assume $\dot {x}_0=0$. Here, (\ref{eq:dynamics-u}) can be represented in the compact form in terms of $\bm{x}$ as
\begin{equation}\label{eq:dynamics-matrix}
  \dot {\bm{x}}=\sum_{i=1}^n \bm{b}_iu_i,
\end{equation}
where $\bm{b}_i=[0,\ldots,1,\ldots,1,0]^\top\in \mathit{\mathbb{R}}^{n+2}$ is a column vector with 1's from ($i+1$)th to ($n+1$)th positions and 0's otherwise.

The $n(+1)$-player differential game problem defined in the compact form as in (\ref{eq:dynamics-matrix}) and (\ref{eq:cost-compact}) can be solved by finding the solutions to the corresponding $n(+1)$ coupled Riccati differential equations \cite{gu}. The set of these coupled Riccati differential equations can be converted into one standard algebraic Riccati equation. The solutions of this Riccati differential equation can be found through terminal values and the backward iteration, for instance, by the ODE45 solver in MATLAB.

Let the matrix $\bm{P}_i$ ($i=0,\ldots,n$) be the solution to the corresponding coupled Riccati differential equations. The closed-loop system dynamics is defined as
\begin{equation}\label{eq:closed-loop}
    \dot {\bm{x}}=\bm{A}_{cl}\bm{x},
\end{equation}
where $\bm{A}_{cl}=-\sum_{i=0}^n\bm{S}_i\bm{P}_i$ is the closed-loop system matrix and ${\bm{S}}_i=\bm{b}_i\bm{b}_i^\top$. As long as all the eigenvalues of $\bm{A}_{cl}$ have negative real parts, the closed-loop system (\ref{eq:closed-loop}) is asymptotic stable, and individual stability of the platoon is ensured. In this approach, the calculations of the numerical solutions to $\bm{P}_i$ using terminal values and the backward iteration is a prerequisite to finding all eigenvalues of $\bm{A}_{cl}$ to check if they have negative real parts. Reference \cite{jond2021autonomous} uses relative dynamics between the vehicles of a convoy to transform the individual dynamics-based differential game into a relative dynamics-based optimal control problem where under the new transformation proves the asymptotic stability of the closed-loop system. This transformation can be adapted straightforwardly to transform the differential game in compact form here into a relative dynamics-based optimal control problem and quote the results directly. Especially, the asymptotic stability of the closed-loop system is presented in Theorem 3 in \cite{jond2021autonomous}.

The common tool to investigate the existence of string stability of platoons governed by a linear system is using s-domain analysis \cite{li2017ieee,Wang2015,Xiao2011,Hidayatullah}. For the PF topology, the transfer function of spacing error of the preceding vehicle $e_{i-1}(s)$ and the following vehicle $e_i(s)$ is derived by Laplace transform. The transfer function for adjacent spacing errors is constructed as $C_i(s)=\frac{e_i(s)}{e_{i-1}(s)}$. If the $\mathcal{H}_\infty$ norm of the transfer function $C_i(s)$ satisfies 
\begin{equation}\label{eq:H-norm}
    \|C_i(s)\|_{\mathcal{H}_\infty} \leq 1, \quad i\in 1,\ldots,n,
\end{equation}
the string stability is achieved.

The spacing error $e_i(t)$ in s-domain is obtained as:      
\begin{align}
    e_i(s)=\mathcal{L}(e_i(t))&=\int_0^\infty (y_i(0)+d_i)\alpha_i(t)\mathrm{e}^{-st}~dt\\\nonumber
    &=(y_i(0)+d_i)\int_0^\infty \frac{\mathrm{cosh}(\sqrt{\omega_i}(t_f-t))}{\mathrm{cosh}(\sqrt{\omega_i}t_f)}\mathrm{e}^{-st}~dt\\\nonumber
    &=(y_i(0)+d_i)\int_0^\infty \frac{\mathrm{e}^{\sqrt{\omega_i}(t_f-t)}+\mathrm{e}^{-\sqrt{\omega_i}(t_f-t)}}{\mathrm{e}^{\sqrt{\omega_i}t_f}+\mathrm{e}^{-\sqrt{\omega_i}t_f}}\mathrm{e}^{-st}~dt\\\nonumber
    &=(y_i(0)+d_i)\int_0^\infty \frac{\mathrm{e}^{\sqrt{\omega_i}t_f}\mathrm{e}^{-(s+\sqrt{\omega_i})t}+\mathrm{e}^{-\sqrt{\omega_i}t_f}\mathrm{e}^{-(s-\sqrt{\omega_i})t}}{\mathrm{e}^{\sqrt{\omega_i}t_f}+\mathrm{e}^{-\sqrt{\omega_i}t_f}}~dt\\\nonumber
    &=\frac{y_i(0)+d_i}{\mathrm{e}^{\sqrt{\omega_i}t_f}+\mathrm{e}^{-\sqrt{\omega_i}t_f}}\left[\int_0^\infty \mathrm{e}^{\sqrt{\omega_i}t_f}\mathrm{e}^{-(s+\sqrt{\omega_i})t}~dt+\int_0^\infty\mathrm{e}^{-\sqrt{\omega_i}t_f}\mathrm{e}^{-(s-\sqrt{\omega_i})t}~dt\right]\\\nonumber
    &=\frac{y_i(0)+d_i}{\mathrm{e}^{\sqrt{\omega_i}t_f}+\mathrm{e}^{-\sqrt{\omega_i}t_f}}\left[\mathrm{e}^{\sqrt{\omega_i}t_f}\int_0^\infty \mathrm{e}^{-(s+\sqrt{\omega_i})t}~dt+\mathrm{e}^{-\sqrt{\omega_i}t_f}\int_0^\infty\mathrm{e}^{-(s-\sqrt{\omega_i})t}~dt\right]\\\nonumber
    &=\frac{y_i(0)+d_i}{\mathrm{e}^{\sqrt{\omega_i}t_f}+\mathrm{e}^{-\sqrt{\omega_i}t_f}}(\frac{\mathrm{e}^{\sqrt{\omega_i}t_f}}{s+\sqrt{\omega_i}}+\frac{\mathrm{e}^{-\sqrt{\omega_i}t_f}}{s-\sqrt{\omega_i}}).
\end{align}

Without loss of generality, for a homogeneous platoon assume $w_i=w_{i-1}$. The transfer function for adjacent spacing errors is obtained as:
\begin{align}
    C_i(s)&=\frac{e_i(s)}{e_{i-1}(s)}=\frac{y_i(0)+d_i}{y_{i-1}(0)+d_{i-1}}.
\end{align}
The $\mathcal{H}_\infty$ norm here is simply $\|C_i(s)\|_{\mathcal{H}_\infty}= |C_i(s)|$. Therefore, applying (\ref{eq:H-norm}) yields:
\begin{align}
    |y_i(0)+d_i|\leq  |y_{i-1}(0)+d_{i-1}|,
\end{align}
as the sufficient condition on string stability. From the closed-form solution in (\ref{eq:pairwise}) for the PF topology we can see that the error $e_i(t)$ in time domain is a function of only the following vehicle longitudinal position $x_i(t)$ and its preceding vehicle longitudinal position $x_{i-1}(t)$. Therefore, it is obvious that the exogenous disturbances added to $e_i$ will not propagate downstream along the vehicle string under the PF topology.

The transfer function analysis of the platoon system under the TPF and general topology is not straightforward. Therefore, we leave the string stability analysis of the platooning system under the TPF and general topologies to another work. It is known that string stability is related to information flow topology, spacing policy, size of a platoon, and type of platoon either homogeneous or heterogeneous \cite{Xiao2011}. The examination of string stability for platoons with general information flow topology poses a challenging issue for the Intelligent Transportation Systems (ITS) society. The string stability can be partially characterized for platoon systems governed by graph theory and described in compact matrix state-space form by analyzing the closed-loop system dynamics eigenvalues using the Lyapunov techniques \cite{FENG201981,Zheng1,Zheng2}. Some works avoid formal proof of the string stability characteristics and enforce sufficient conditions of  string stability as optimization constraints in the design of receding horizon and model predictive control (MPC)-based controller \cite{dunbar2011distributed,Thormann2022,1893275,f45ewr}.

\section{Model Predictive Control Design}\label{section:MPC}

In this section, we approach the platoon control problem with the MPC method. Under this control design, at each sampling time, first, the future behavior of the controlled system along a finite time horizon is predicted. Next, the control input is calculated by solving the underlying optimization and then is applied to the system until the next sampling time when the whole procedure is repeated.

The continuous state-space model (\ref{eq:dynamics-matrix}) is converted to a discrete-time system as \cite{Wang2021}
\begin{equation}\label{eq:dynamics-disc}
    \frac{\bm{x}(k+1)-\bm{x}(k)}{T_s}=\sum_{i=1}^n \bm{b}_iu_i(k),
\end{equation}
where $T_s$ is a sampling time. The system model (\ref{eq:dynamics-disc}) can be rewritten as
\begin{equation}\label{eq:dynamics-matrix-dis}
  \bm{x}(k+1)=\bm{x}(k)+\sum_{i=1}^n \bm{\hat{b}}_iu_i(k),
\end{equation}
where $\bm{\hat{b}}_i=T_s\bm{b}_i$. The future behavior of the system along a finite time horizon $N\in\mathbb{N}$ is
{\small
\begin{align}
    \bm{x}(k+1)&=\bm{x}(k)+\sum_{i=1}^n \bm{\hat{b}}_iu_i(k), \label{eq:future-state-1} \\
    \bm{x}(k+2)&=\bm{x}(k+1)+\sum_{i=1}^n \bm{\hat{b}}_iu_i(k+1)=\bm{x}(k)+\sum_{i=1}^n \bm{\hat{b}}_i\big(u_i(k)+u_i(k+1) \big), \\
    \vdots \nonumber\\
    \bm{x}(k+N)&=\bm{x}(k+N-1)+\sum_{i=1}^n \bm{\hat{b}}_iu_i(k+N-1)=\bm{x}(k)+\sum_{i=1}^n \bm{\hat{b}}_i\big(u_i(k)+u_i(k+1)+\cdots+u_i(k+N-1) \big). \label{eq:future-state-N}
\end{align}
}

Let $\bm{X}(k)=[\bm{x}^\top(k+1|k),\cdots,\bm{x}^\top(k+N|k)]^\top\in\mathbb{R}^{N(n+2)\times1}$ and $\bm{U}_i(k)=[u_i(k|k),\cdots,u_i(k+N-1|k)]^\top\in\mathbb{R}^{N\times1}$ with $\bm{x}(k+j|k)$ being the predicted state at time $k+j$ evaluated at time $k$ (where $\bm{x}(k|k)=\bm{x}(k)$) and $u_i(k+j|k)$ being the predicted control input at time $k+j$, evaluated at time $k$. The set of future states and control inputs in (\ref{eq:future-state-1})-(\ref{eq:future-state-N}) is expressed as the following compact form
\begin{equation}\label{eq:prediction-model}
    \bm{X}(k)=\bm{A}\bm{x}(k)+\sum_{i=1}^n\bm{B}_i\bm{U}_i(k),
\end{equation}
where $\bm{A}=[\bm{I}_{n+2},\cdots,\bm{I}_{n+2}]^\top\in\mathbb{R}^{N(n+2)\times (n+2)}$ and $\bm{B}_i=\begin{bmatrix}
\bm{\hat{b}}_i &  \cdots &\bm{0}\\
\vdots &  \ddots& \vdots\\
\bm{0} &  \cdots  & \bm{\hat{b}}_i 
\end{bmatrix}\in\mathbb{R}^{N(n+2)\times N}$.

The minimization cost functions $J_i^{\text{MPC}}$ are defined as
\begin{align}\label{eq:opt-gen-top-dis}
    \min_{\bm{U}_i(k)}~  J_i^{\text{MPC}}(\bm{X}(k),\bm{U}_i(k))&=\bm{x}^\top(k+N|k) \bm{Q}_i \bm{x}(k+N|k)+\nonumber\\
    &\sum_{j=0}^{N-1}\Big(\bm{x}^\top(k+j|k) \bm{Q}_i\bm{x}(k+j|k)+\bm{u}_i^\top(k+j|k)\bm{u}_i(k+j|k)\Big),
\end{align}
or in compact form:
\begin{equation}\label{eq:opt-gen-top-dis-comp}
    \min_{\bm{U}_i(k)}~  J_i^{\text{MPC}}(\bm{X}(k), \bm{U}_i(k))=\bm{x}^\top(k)\bm{Q}_i\bm{x}(k)+\bm{X}^\top(k) \bm{\Phi}_i \bm{X}(k)+
    \bm{U}_i^\top(k)\bm{U}_i(k),
\end{equation}
where $\bm{\Phi}_i=\mathrm{diag}(\bm{Q}_i,\cdots,\bm{Q}_i)$. Notice that the cost functions (\ref{eq:opt-gen-top-dis-comp}) are not discretization/approximation of the cost functions (\ref{eq:cost-compact}).

By substituting the prediction model in (\ref{eq:prediction-model}), the final cost functions are as the following
\begin{align}\label{eq:MPC-final-cost}
        \min_{\bm{U}_i(k)}~  J_i^{\text{MPC}}(\bm{X}(k), &\bm{U}_i(k))=\bm{x}^\top(k)\bm{Q}_i\bm{x}(k)+\nonumber\\
        &\Big(\bm{A}\bm{x}(k)+\sum_{i=1}^n\bm{B}_i\bm{U}_i(k)\Big)^\top \bm{\Phi}_i \Big(\bm{A}\bm{x}(k)+\sum_{i=1}^n\bm{B}_i\bm{U}_i(k)\Big)+
    \bm{U}_i^\top(k)\bm{U}_i(k).
\end{align}
Reference \cite{f45ewr} discusses that the cost functions in form of (\ref{eq:MPC-final-cost}) are convex, and therefore, a unique global minimum exists. The control law minimizing the cost function (\ref{eq:MPC-final-cost}) is given by
\begin{equation}\label{eq:MPC-control}
       \bm{U}_i^*(k)=-(\bm{I}+\bm{B}_i^\top\bm{\Phi}_i\bm{B}_i)^{-1}\bm{B}_i^\top\bm{\Phi}_i\bm{A}\bm{x}(k). 
\end{equation}   
The control inputs, i.e., the velocity commands, are calculated from (\ref{eq:MPC-control}) and applied during each sampling time.

\section{Simulation Results}\label{section:simulation}
In this section, we carry out several simulation experiments to justify the effectiveness of the proposed differential game-based platoon control models and verify the derived solutions. Moreover, we conduct simulations using the MPC method to compare with the differential game-based control. For the platoon with a general topology, the closed-form solutions are not available, and thus the CAVs relative displacement trajectories are calculated from (\ref{eq:PF-trajectory}). All the analytical solutions used for simulations are double-checked for their accuracy by resolving the differential games in their compact matrix form as discussed in Subsection \ref{sec:stability}. In the following, we provide two simulation experiments per each to case study the platoon control models and solutions related to the PF, TPF, and general topologies as well as the MPC method.

\noindent\textbf{PF topology}: The CAVs initial positions, their spacing policies, and weighting parameter values are given in Table \ref{tab:PF} which are generated randomly. The size of the simulation model is $n=5$ which includes 5 actual vehicles in the platoon and one reference vehicle/trajectory. The terminal time is considered as $t_f=10$. For the PF topology, applying the closed-form solutions in Theorem \ref{theorem:PF}, the CAVs trajectories are calculated. The time histories of the CAVs spacing policies errors and their velocities (i.e., their control inputs) are shown in Figure \ref{fig:PF}. We can see that all the time histories approach to zero which means that each relative displacement satisfies $\lim_{t\to\infty} y_i(t)+d_i=0$ (note that $d_i<0$). We also observe that the corresponding velocities of the formation control converge to zero at the end of the terminal time. In this experiment, CAVs 1 and 4 converge faster while CAVs 2 and 5 converge slower.  

For the second experiment, two weighting parameter values, i.e., $\omega_3$ and $\omega_4$, are such that the former is relatively large and the latter is relatively small (see Table \ref{tab:PF}). From Figure \ref{fig:PF}, we see that the relative displacement trajectory corresponding to CAV 3 converges faster than the relative displacement trajectory corresponding to CAV 4. Similar observations can be made about their velocities. This is because a large $\omega_i$ for CAV $i$ causes its optimization to boil down to the minimization of the relative displacement error. For a relatively small $\omega_i$, the control effort minimization term in the CAV's cost function becomes more significant so that the reference relative displacement is not accomplished exactly.

\begin{table}\caption{Parameter values for the PF topology}
\centering
 \begin{tabular}{|c| c c c | c c c |} 
 \hline
 & &  Scenario 1 & &  & Scenario 2 & \\
 \hline
 $i$& $\omega_i$ &  $d_i$ & $x_i(0)$ & $\omega_i$ &  $d_i$ & $x_i(0)$ \\ [0.5ex] 
 \hline
 0 &   -    &   -   & 5.0937 &   -    &   - & 4.3064 \\ 
 1 & 0.6443 & -0.1 & 4.6469 & 0.8258 & -0.2 & 3.6586\\ 
 2 & 0.3786 & -0.2 & 3.8786 & 0.5383 & -0.2 & 3.2456\\
 3 & 0.8116 & -0.2 & 2.4340 & 0.9961 & -0.1 & 0.8450\\
 4 & 0.5328 & -0.3 & 2.1793 & 0.0782 & -0.3 & 0.2151\\
 5 & 0.3507 & -0.3 & 0.4056 & 0.4427 & -0.1 & 0.0770\\ [1ex] 
 \hline
 \end{tabular}\label{tab:PF}
\end{table}
\begin{figure}
     \centering
         \includegraphics[width=\textwidth]{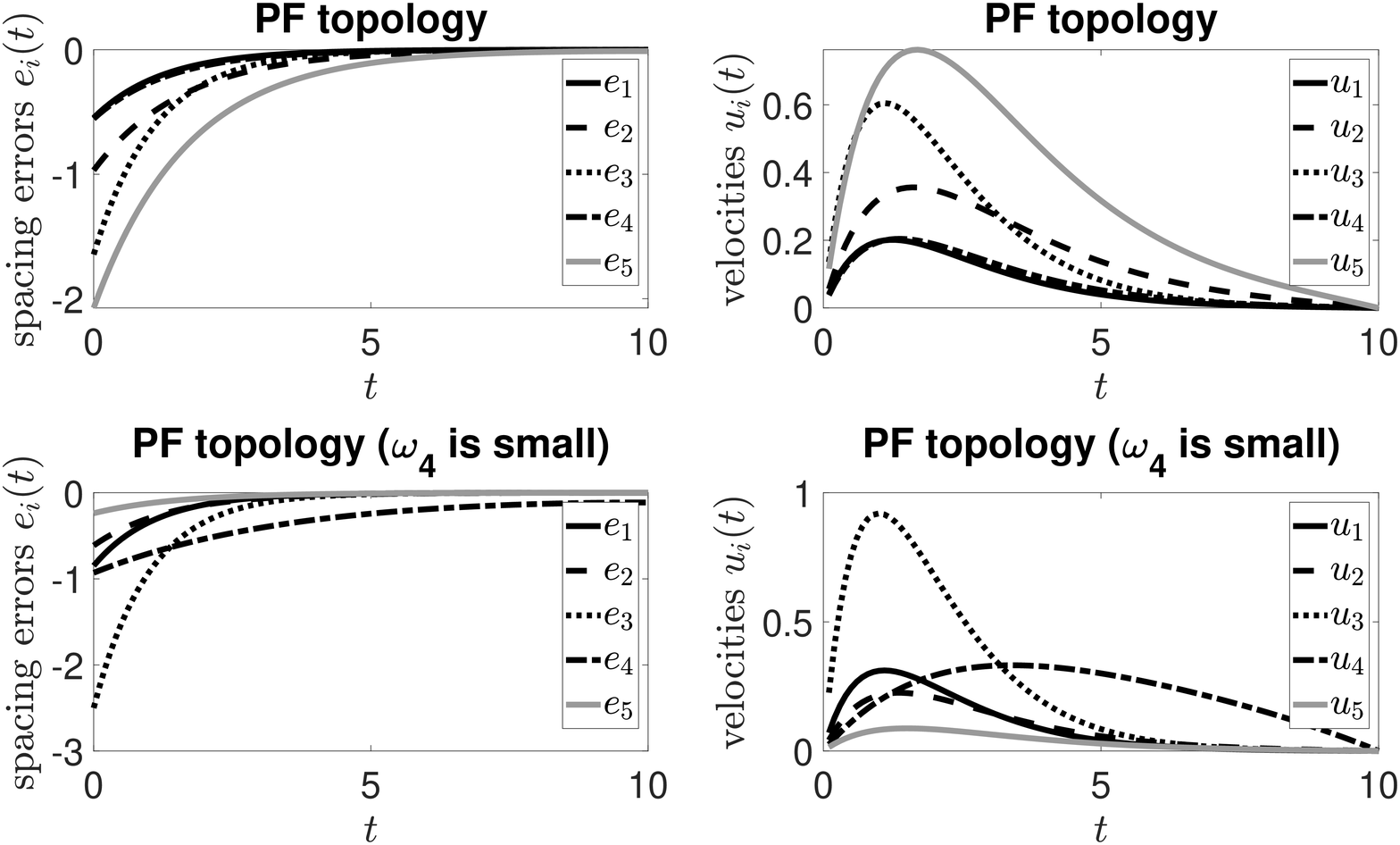}
        \caption{Time histories of relative displacement errors and velocities for the PF topology.}
        \label{fig:PF}
\end{figure}

\noindent\textbf{TPF topology}: Two experiments were carried out considering the TPF topology. The relevant parameter values are given in Table \ref{tab:TPF}. The time histories of relative displacement errors and velocities corresponding to the first experiment are shown in Figure \ref{fig:TPF}. Compared with time history plots in Figure \ref{fig:PF} when the underlying topology is PF, we see that under the TPF topology convergence is faster. In the current experiment, starting from CAV 3 each vehicle receives information from two links that are one sensor link, and one V2V communication link. This accelerates the convergence of the trajectories and velocities.

We run a second experiment to inspect closely the platoon behavior under the TPF topology. In the second experiment, we assume that the CAV 5's sensor is malfunctioning and this vehicle intends to participate in the platoon and acquire its relative position within the formation only using the V2V communication link (i.e., $\omega_5=0$). From Figure \ref{fig:TPF}, we observe that all CAVs, including CAV 5, acquire their predetermined inter-vehicle spacing policies within the specified terminal time.  

\begin{table}\caption{Parameter values for the TPF topology}
\centering
 \begin{tabular}{|c| c c c c | c c c c |} 
 \hline
 & &   Scenario 3& & &  &  Scenario 4 && \\
 \hline
 $i$& $\Tilde{\omega}_i$& $\omega_i$ &  $d_i$ & $x_i(0)$ & $\Tilde{\omega}_i$ & $\omega_i$ &  $d_i$ & $x_i(0)$ \\ [0.5ex] 
 \hline
 0 &   -         &   -    &   -   & 5.2747 &  -        &  -    &   - & 6.4947 \\ 
 1 &   -         & 0.9157 & -0.1 & 4.8244 &  -         &0.6948 & -0.3 & 5.7887\\ 
 2 &   -         & 0.7922 & -0.3 & 4.7875 &  -         &0.3171 & -0.3 & 3.7157\\
 3 &   0.8491    & 0.9595 & -0.2 & 2.7344 &  0.3816    &0.9502 & -0.2 & 3.2774\\
 4 &   0.9340    & 0.6557 & -0.1 & 1.3925 &  0.7655    &0.0344 & -0.1 & 1.9611\\
 5 &   0.6787    & 0.0357 & -0.3 & 0.4877 &  0.7952    &0      & -0.1 & 0.8559\\ [1ex]
 \hline
 \end{tabular}\label{tab:TPF}
\end{table}
\begin{figure}
         \centering
         \includegraphics[width=\textwidth]{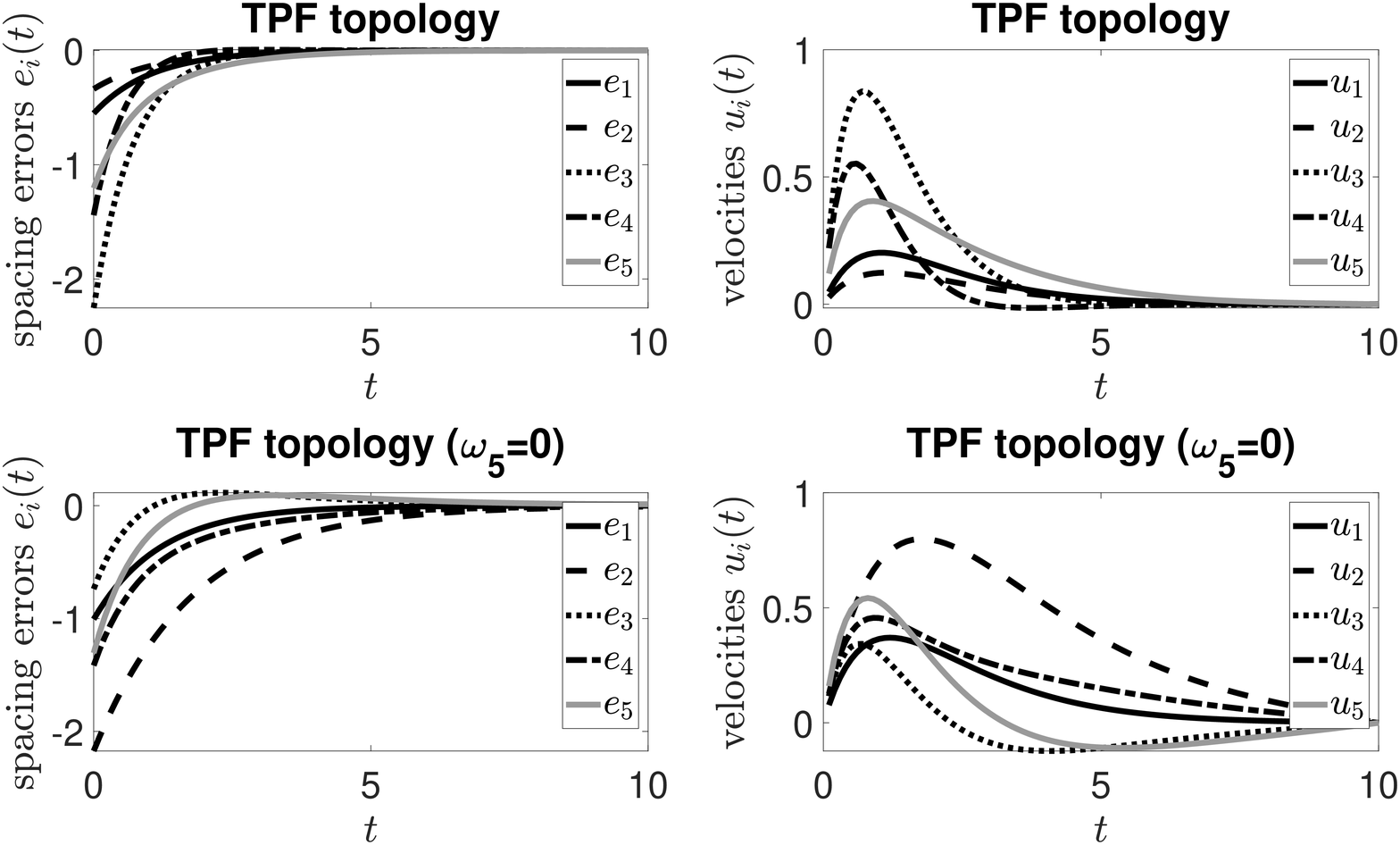}
        \caption{Time histories of relative displacement errors and velocities for the TPF topology.}
        \label{fig:TPF}
\end{figure}

So far, we have investigated the effectiveness of the models and validated the closed-form solutions for the PF and TPF topologies through relevant experiments. In the following, we run two additional experiments to cover the general topology with the parameter values given in Table \ref{tab:general}.

\noindent\textbf{General topology}: As two common forms of general communication topologies for a platoon, we consider the All Predecessor Following (APF) and Leader Following (LF) topologies. The time histories of relative displacement errors for these topologies are shown in Figure \ref{fig:gen}. As it is seen from the figure, the CAVs with the APF topology converge faster while they can not accomplish the predefined inter-vehicle spacing policies exactly with the LF topology in the specified terminal time $t_f=10$. This is because of missing the sensor links in the LF topology. 

\begin{table}\caption{Parameter values for the APF and LF topologies}
\scriptsize
\centering
 \begin{tabular}{|c|c c c c | c c c c |} 
 \hline
 & &   APF topology & &  & & ~~~~~LF topology & &   \\
 \hline
 $i$ &    $\mathcal{N}_i$& $\omega_{ij}$   &$d_i$& $x_i(0)$& $\mathcal{N}_i$& $\omega_{ij}$ &  $d_i$ & $x_i(0)$ \\ [0.5ex] 
 \hline
 0  &     -      &   -                        & -          &   5.5166  &       -     &    -     &    -    &   5.5166      \\ 
 1  & \{0\}      &  0.6324         &  -0.2      &   4.7511  &    \{0\}    &  0.2769  & -0.2    &   4.7511     \\ 
 2  & \{1\}      &  1.0975  &  -0.2      &   2.1937  &    \{1\}    &  0.0462  &  -0.2   &   2.1937      \\
 3  & \{1,2\}    &   \{0.7547,1.2785\}                   &  -0.1      &   1.9078  &    \{1\}    &  0.0971  &  -0.1   &   1.9078      \\
 4  & \{1,2,3\}  &  \{0.2760,0.5469,2.9134\}  &  -0.3      &   1.5855  &    \{1\}    &  0.0935  &  -0.3   &   1.5855       \\
 5  & \{1,2,3,4\}&  \{0.9649,0.8147,0.1576,1.9706\}  &  -0.2      &   0.1722  &    \{1\}    &  0.1948  &  -0.2   &   0.1722      \\ [1ex] 
 \hline
 \end{tabular}\label{tab:general}
\end{table}

\begin{figure}
     \centering
         \includegraphics[width=\textwidth]{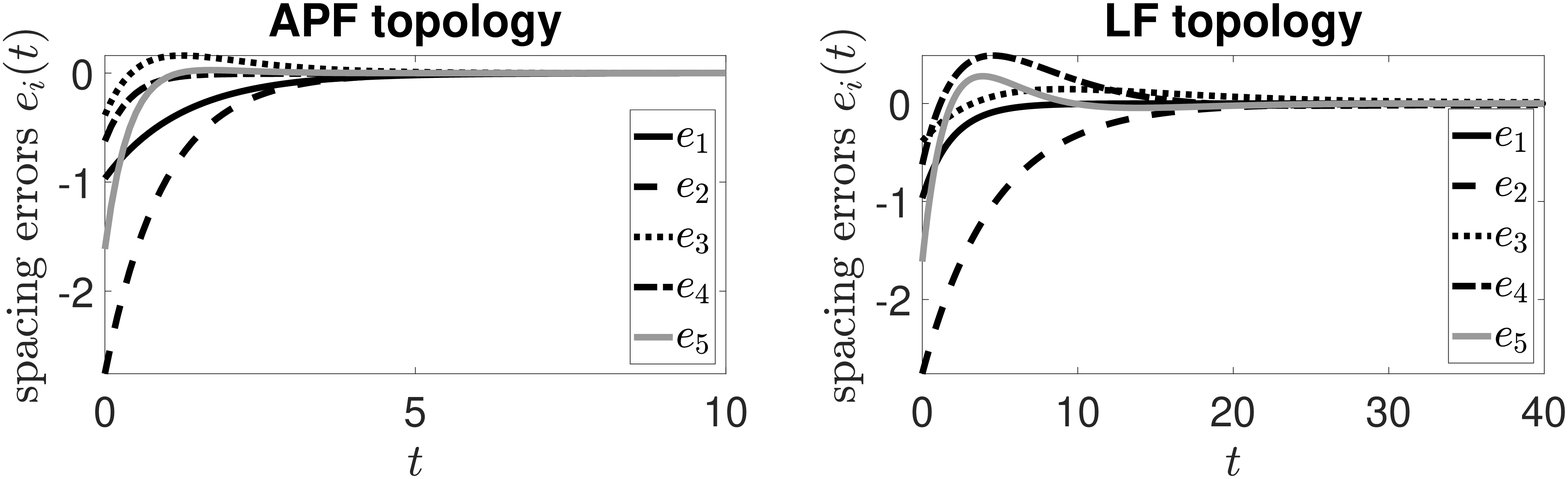}
        \caption{Time histories of relative displacement errors for the APF and LF topologies.}
        \label{fig:gen}
\end{figure}

The convergence performance of vehicle platoons is directly related to the algebraic connectivity of their underlying communication topology \cite{9477303,10101010,9678129}. In particular, the second-smallest eigenvalue $\sigma_2$ (also referred to as the Fiedler value \cite{DEABREU200753}) of the topology Laplacian matrix $\bm{L}$ is a measure of the algebraic connectivity of networked systems. 
Here, we investigate the influence of the Fiedler value of topology Laplacian matrix and other communication topology-related parameters on the convergence performance of the studied vehicle platoons in the previous experiments. Let's assume that each trajectory $y_i(t)$ convergences when $e_i(t)$ falls below the threshold value $0.01$ and let the mean convergence time of all trajectories $y_i(t)$ $(i=1,\ldots,n)$ be the convergence performance metric of studied platooning systems. Table~\ref{tab:convergence} shows the Fiedler value, the mean convergence time, the number of communication links, and their mean weight for each platoon system.   
We observe that in general the higher the Fiedler value the better the convergence performance. The best convergence rate is achieved with the TPF topology where the communication network is better connected and the mean weight is higher compared to other experiments. The worst convergence performance belongs to the LF topology where the corresponding Fiedler value and mean weight are the lowest among all experiments.

\begin{table}\caption{Topological characteristics of the studied platooning systems}
\centering
 \begin{tabular}{|c| c c c c|} 
 \hline
 Topology& $\sigma_i$ & mean time & number of links & mean weight\\ [0.5ex] 
 \hline
PF &   0.1732 & 6.8 & 5 & 0.5436   \\ 
 PF ($\omega_4$ is small) & 0.3224 & 7.9 & 5 & 0.5762  \\ 
 TPF & 0.5762 & 3.9 & 8 & 0.7276 \\
 TPF ($\omega_5=0$) & 0.3543 & 8.9 & 8 & 0.4923 \\
 APF & 0.6528 & 4.1& 11 & 1.0370 \\
 LF & 0.0199  & 20.4& 5 & 0.1257 \\ [1ex] 
 \hline
 \end{tabular}\label{tab:convergence}
\end{table}

To illustrate the advantage of using the proposed differential game-based platoon control method, we solve two platoon control systems with the PF and TPF topologies using the proposed method and MPC. The parameters of the MPC method are selected as $N=5$ and $T_s=0.1$. The platoon parameters for both systems are $n=3$, $t_f=15$, $d_i=-0.25$ for $i=1,2,3$, and their topologies are defined with the following incidence matrices
\begin{equation*}
D_{\text{PF}}=\begin{bmatrix}
-1& 0& 0 \\ 1& -1& 0\\
0& 1& -1\\ 0& 0& 1
\end{bmatrix},\qquad D_{\text{TPF}}=\begin{bmatrix}
-1& 0& 0& 0&\\ 1& -1& 0& -1\\
0& 1& -1& 0\\ 0& 0& 1& 1
\end{bmatrix}.
\end{equation*}

The platoon control systems above using the proposed differential game-based platoon control method easily convert to analytical expressions and closed-form solutions. The associated relative displacement trajectories with the Nash equilibrium are calculated from (\ref{eq:PF-trajectory}). Once more the relative displacement trajectories using the MPC method are calculated from (\ref{eq:MPC-control}). The time histories of relative displacement errors for these systems with both approaches are shown in Figure \ref{fig:MPC}. It is seen that the proposed control method is cost-effective since each vehicle achieves its relative displacement within the platoon formation by demanding a smaller control effort. Moreover, the MPC is computationally expensive as the underlying optimization repeats at each sampling time.  
\begin{figure}
     \centering
         \includegraphics[width=\textwidth]{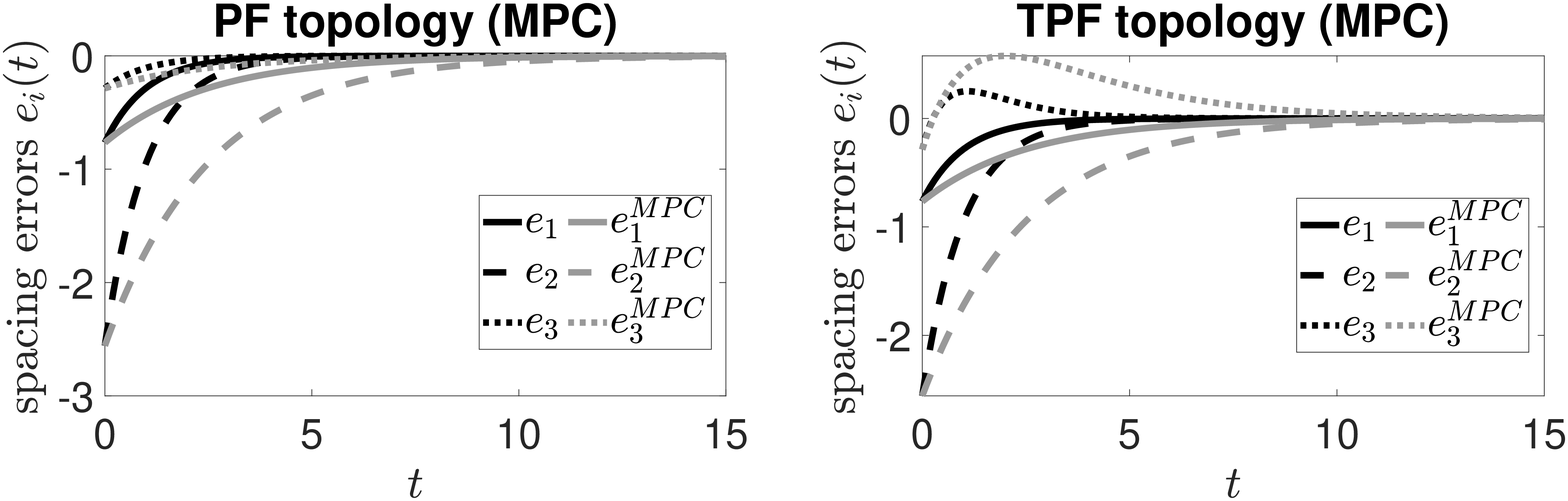}
        \caption{Time histories of relative displacement errors with the proposed control and MPC.}
        \label{fig:MPC}
\end{figure}

\section{Conclusions}\label{section:conclusions}
 In this paper, the CAV platoon formation control is formulated with noncooperative differential games. The existence of a Nash equilibrium controller, which CAVs use as their self-enforcing control strategy to acquire the platoon formation, is proved. Additionally, the CAVs individual relative displacement trajectories are obtained in the closed-form under the PF and TPF topologies and can be calculated numerically under general topology. The simulation studies showed a platoon with 5 vehicles acquired their predetermined inter-vehicle spacing policies utilizing their Nash equilibria. Comparisons with the MPC method have shown that the proposed approach is powerful and promising which enables some future research directions. First, state constraints can be incorporated into the problem. Therefore, the question arises of whether numerical solutions to constrained optimization problems can be obtained. The candidates for such numerical techniques are the Shooting method, Dynamic programming, and the Hamilton-Jacobi-Bellman technique. Another extension is to consider general LTI systems or double integrator dynamics to express the platoon behavior. We might also suppose arbitrary final formation in the optimization problem on top of the unspecified terminal condition utilized in this study. Finally, we can include double states which represent the two-dimensional maneuvers as well. Other possible future directions to address the string-stable platooning control problem are highlighted before.

\subsection*{Acknowledgements} 
This work was supported by SGS, VSB - Technical University of Ostrava, Czech Republic, under the grant No. SP2022/12 “Parallel processing of Big Data IX”.
This work was also supported in part by the Science and Research Council of
Turkey (T\"{U}B\.{I}TAK) under project EEEAG-121E162.

\end{document}